\documentclass[letterpaper,11pt]{article}

\usepackage{amsmath, amsthm, amssymb}

\usepackage{amsfonts}
\usepackage{graphicx}
\usepackage{epsfig}
\usepackage{color}
\usepackage{algorithm}
\usepackage[noend]{algorithmic}
\usepackage{fullpage}
\usepackage{graphics}

\usepackage{verbatim}

\newtheorem{theorem}{Theorem}

\newtheorem{lemma}[theorem]{Lemma}

\newtheorem{claim}[theorem]{Claim}

\newtheorem{observation}[theorem]{Observation}

\newcommand{\cover}{\mbox{coverage}}
\newcommand{\dur}{\mbox{M}}
\newcommand{\Dur}{\mbox{M}}
\newcommand{\inst}{\textbf{S}}

\newcommand{\W}{\mathcal W}
\newcommand{\C}{\mathcal C}

\newcommand{\head}{h}
\newcommand{\tail}{\tau}
\newcommand{\load}{\mbox{load}}

%From Erik Demaine:
% Complex \xxx for making notes of things to do.  Use \xxx{...} for general
% notes, and \xxx[who]{...} if you want to blame someone in particular.
% Puts text in brackets and in bold font, and normally adds a marginpar
% with the text `xxx'' so that it is easy to find.  On the other hand, if
% the comment is in a minipage, figure, or caption, the xxx goes in the text,
% because marginpars are not possible in these situations.
{\makeatletter
 \gdef\xxxmark{%
   \expandafter\ifx\csname @mpargs\endcsname\relax % in minipage?
     \expandafter\ifx\csname @captype\endcsname\relax % in figure/caption?
       \marginpar{xxx}% not in a caption or minipage, can use marginpar
     \else
       xxx % notice trailing space
     \fi
   \else
     xxx % notice trailing space
   \fi}
 \gdef\xxx{\@ifnextchar[\xxx@lab\xxx@nolab}
 \long\gdef\xxx@lab[#1]#2{{\bf [\xxxmark #2 ---{\sc #1}]}}
 \long\gdef\xxx@nolab#1{{\bf [\xxxmark #1]}}
 % This turns them off:
 %\long\gdef\xxx@lab[#1]#2{}\long\gdef\xxx@nolab#1{}%
 \gdef\turnoffxxx{\long\gdef\xxx@lab[##1]##2{}\long\gdef\xxx@nolab##1{}}%
} 

%opening
\title{Decomposing Coverings and the Planar Sensor Cover Problem}
\author{Matt Gibson \and Kasturi Varadarajan}

\begin{document}

\bibliographystyle{plain}
\maketitle

\begin{abstract}
We show  that a $k$-fold covering using translates of an arbitrary convex polygon can 
be decomposed into $\Omega(k)$ covers (using an efficient algorithm). We generalize this result 
to obtain a constant factor approximation to the sensor cover problem where the ranges of
the sensors are translates of a given convex polygon. The crucial ingredient in this
generalization is a constant factor
approximation algorithm for a one-dimensional version of the sensor
cover problem, called the \textit{Restricted Strip Cover} (RSC) problem, where
sensors are intervals of possibly different lengths. Our algorithm for RSC 
improves on the previous $O(\log \log \log n)$ approximation. 
\end{abstract}

\section{Introduction}
\label{sec:intro}
Let us call an object (set) $P$ in the plane {\em cover-decomposable} if there there exists
a constant $c$ (which may depend on $P$) such that any collection of translates of $P$, with
the property that every point in the plane has $c$ or more translates covering it, can be
partitioned into two covers. Pach conjectured in the 1980s that every convex object is cover 
decomposable \cite{pach86, pt07}, and this remains open. Let us focus on a  finite version
of this definition and say that $P$ is cover-decomposable if there exists a constant $c$ such
that any finite collection of translates of $P$ can be partitioned into two sub-collections,
so that each sub-collection covers every point in the plane covered by $c$ or more translates
in the original collection. 

In the 1980's, Mani and Pach \cite{mp88} showed that a unit disk is cover-decomposable (with
the constant $c$ being $33$.)  Also in the 1980's, Pach \cite{pach86} showed that any centrally 
symmetric convex polygon is cover-decomposable. Tardos and T{\'o}th \cite{tt07} showed somewhat
more recently that any triangle is cover-decomposable. Finally, a very recent result due to
P\'{a}lv\"{o}lgyi  and T{\'o}th  \cite{pt08} shows that any convex polygon is cover-decomposable. The 
constant $c$ in the results of \cite{pach86} and \cite{pt08} depends on the
convex polygon, in particular the number of its sides, and that is why these results say nothing about
the original conjecture of Pach. Examples of non-convex polygons that are not 
cover-decomposable are known \cite{p09}.

Motivated partly by questions in scheduling sensors \cite{bej07}, an extension of the 
cover-decomposability question has recently attracted a lot of attention: Given a collection 
of translates of $P$ and any integer $k$, partition the collection into as many 
sub-collections as possible
so that each sub-collection covers every point covered by $k$ or more of the original 
translates. While the original results on cover-decomposability do yield non-trivial 
bounds for this question, these are usually far from optimal. For instance,
Tardos and T{\'o}th \cite{tt07} implies that a $k$-fold cover with translates of a triangle
can be partitioned into $\Omega(\log k)$ covers. 

In this line of work, Pach and T{\'o}th \cite{pt07} showed that a $k$-fold cover with
a centrally symmetric convex polygon $P$  can be decomposed into $\Omega(\sqrt{k})$ covers,
where the constant as before depends on $P$. Aloupis et al. \cite{acclor08} improved this
result and obtained an optimal bound, showing that one can obtain $\Omega(k)$ covers. 
Recently, Gibson \cite{g09} was able to show an optimal $\Omega(k)$ bound when $P$ is
a triangle. All these results have corresponding efficient algorithms that compute the
desired decompositions. Aloupis et al. \cite{accls08} consider other related problems.

The problem of decomposing multiple coverings seems to be harder if instead of a convex
polygon we have a unit disk. Pandit, Pemmaraju and Varadarajan \cite{pv08} consider a special case 
where the universe that needs to be covered is the same as the centers of the
covering disks. For this version of the problem, better known as
the domatic partition problem for {\em unit disk graphs} \cite{pp06}, they 
show that it is possible to compute $\Omega(L)$ disjoint covers in polynomial time.

\subsection{Sensor Cover}
The work on decomposing multiple coverings is related to a problem motivated by
scheduling sensors. Suppose we have a universe, which is simply some collection of points, and a set of sensors such that each sensor covers some subset of the universe.  Further suppose that each sensor is powered by a battery and thus can only be turned
on for some amount of time.  We refer to this amount of time as the sensor's \textit{duration}.  We are interested in scheduling a start time to each of the
sensors such that the entire universe is covered for as long as possible.  This problem was introduced by Buchsbaum et al. \cite{bej07} as the 
\textit{sensor cover problem}.  We only consider the non-preemptive case in 
which once a sensor has been turned on, it will remain on until its duration 
has been depleted.

We now formally define the combinatorial sensor cover problem, followed
by the geometric instances that are the subject of this paper. We are given a 
finite universe $U$ that we wish to cover, and a set $\inst$ of $n$
sensors.  For each sensor $s \in \inst$, we let $R(s) \subseteq U$ denote the 
region that $s$ covers.  We call this region the \textit{range} of $s$.  For each $x \in R(s)$, we say that $s$ is 
\textit{live} at $x$.  Each sensor $s$ also has a duration $d(s)$ which is 
some positive integer.

A schedule of the set $\inst$ of sensors is an assignment of a positive 
integer, called the start time, to each sensor in some subset 
$S' \subseteq \inst$. We will denote by $t(s)$ the
start time of sensor $s$. The sensors in $\inst \setminus S'$ are said to be 
unassigned. A sensor $s$ that is assigned a start time $t(s)$ is said to be 
\textit{active} at times $\{t(s), t(s) +
1, . . . , t(s) + d(s) - 1\}$.

Let $S$ be some schedule of $\inst$.  A point $x \in U$ is said to be covered 
at time $t > 0$ if there is a sensor $s$ such that $x \in R(s)$ and $s$ is 
active at time $t$.  For each $x \in U$, define the duration of $x$ in the 
schedule to be $M(S, x) = max\{j : \forall j' \leq j, \exists s \in S, s \text{ covers } x \text{ at time } j'\}$. 
(If no sensor covers $x$ at time 1, then define $M(S, x) = 0$.) The 
duration of the schedule $S$ is defined to be $M(S) = \text{min}_x M(S, x)$.  
The goal of the problem is to compute a schedule of maximum duration.

The \textit{load} at a point $x \in U$ is 
$L(x) = \sum_{s \in \inst: x \in R(s)} d(s)$.  The \textit{load} of the 
problem instance is $L = \text{min}_x L(x)$.  Let $OPT$ denote the duration of 
an optimal schedule.  Clearly, $OPT \leq L$, and thus any
approximation ratio that is with respect to $L$ is also with respect to $OPT$.

A closely related problem to the sensor cover problem is the 
\textit{domatic partition problem}.  In this problem, we are given a 
graph with the goal of finding the maximum number of disjoint dominating sets.
A \textit{dominating set} is a subset of the vertices such that for each 
vertex in the graph, either it is in the set or it has a neighbor in the set.  
Domatic partition can be viewed as a special case of the sensor cover problem 
where the universe is the vertex set, each vertex of the graph is a sensor, 
the range of each sensor is its corresponding 
vertex's closed neighborhood, and each sensor has unit duration.  
Feige et al. \cite{fhk03} show that it is NP-hard to approximate this 
problem to within a $\log n$-factor and give a simple randomized algorithm 
that achieves an $O(\log n)$-approximation, where $n$ is the number of
vertices in the graph.

As pointed by Buchsbaum et al. \cite{bej07}, the lower bound above given by 
Feige et al. implies that general sensor cover cannot be approximated to 
better than a $\log n$ factor.  On the positive side, 
Buchsbaum et al. \cite{bej07} present a poly-time algorithm for the sensor
cover problem that returns an $O(\log U)$ approximation. This algorithm 
extends an algorithm for the set cover packing problem \cite{fhk03}, which is 
the special case of the sensor cover problem with the duration of all sensors 
being $1$. In many applications, the sensors do not cover arbitrary subsets 
of the universe, but rather the points in the universe lie in some geometric 
space and the sensors cover some geometric subset of the universe.  
In such cases, we will see that it is possible to do better than the 
$\log n$ lower bound for general sensor cover.

\paragraph{Restricted Strip Cover (RSC).} Here, the universe $U$ is a set of points
on the real line, and the range $R(s)$ of each sensor $s$ is an interval
on the real line. The RSC was introduced and studied 
by Buchsbaum et al. \cite{bej07}, who    
showed that it is NP-hard and give a polynomial-time 
$O(\log \log \log n)$-approximation algorithm, where $n$ is the number of 
sensors.  They show that their algorithm does better for special cases 
of RSC.  In particular, they show that their algorithm is a 
$(2 + \epsilon)$-approximation for any $\epsilon > 0$ when the sensors are 
non-nested; this includes the case where the ranges of all the sensors
have the same size. They also give an example instance whose load
is $4$ but the duration of the optimal schedule is $3$.  

The RSC bears some resemblance to the well studied 
{\em dynamic storage allocation} \cite{g99, bkk03}. The RSC problem can 
be viewed in the following way. We are given a set of rectangles, and we are 
allowed to slide each rectangle vertically; the goal is to find a placement 
of the rectangles so that we cover a horizontal strip that is as wide as 
possible. In the dynamic storage allocation problem, we are also given a set
of rectangles, each of which we are allowed to slide vertically; the goal
is to find a placement of the rectangles such that no two of them
overlap and the rectangles are contained in a horizontal strip that
is as thin as possible. The dynamic storage allocation problem admits
constant factor approximation algorithms \cite{g99, bkk03}, and these
are with respect to the load, where now the load is the {\em maximum}
of the pointwise loads. We refer the reader to \cite{bej07} for a 
review of the similarities of the RSC to other problems 
%\cite{agp04,dvc04, dkmt08, gj90, kr00, ph03, sp01} 
studied in the literature.

\paragraph{Planar Sensor Cover for Polygon Translates} Here, the 
universe $U$ is a set of points in $\Re^2$, and the range $R(s)$ of each sensor is a translate
of a fixed convex polygon. In the remainder of this paper, we will
refer to this problem as simply the planar sensor cover problem.
The results on decomposing multiple coverings have a clear implication 
for the special case of the planar sensor cover problem where all
sensors have unit durations. For instance, the result of Aloupis 
\cite{acclor08} gives a constant factor approximation for centrally symmetric
polygons, since it decomposes an $L$-fold cover into $\Omega(L)$ covers,
thus yielding a schedule of duration $\Omega(L)$. Gibson's result \cite{g09}
gives a constant factor approximation for triangles.

For the planar sensor cover problem with the durations of the sensors
not being the same, the best known result is the logarithmic approximation 
inherited from the combinatorial sensor cover problem. An exception is
the recent constant factor approximation algorithm of Gibson \cite{g09} 
for triangles, which is based on a preliminary version of this paper, and
uses our results on RSC. 

\subsection{Our Contributions}

\paragraph{Decomposing Multiple Coverings.} We obtain an optimal result
for translates of an arbitrary convex polygon:

\begin{theorem}
\label{thm:decompose-main}
For any convex polygon $P$ in the plane, there exists a constant $\alpha \geq 1$ so that
for any $k \geq 1$ and any finite collection of translates of $P$, we can partition the 
collection into $k/\alpha$ sub-collections, each of which covers any point in the plane that is
covered by $k$ or more translates in the original collection. Such a partition can be
computed by an efficient algorithm.
\end{theorem}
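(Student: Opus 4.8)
The plan is to reduce the two–dimensional covering problem to a bounded number of one–dimensional instances of the Restricted Strip Cover problem, and to exploit throughout that $P$ is a \emph{fixed} polygon — it has a fixed number $m$ of edges, and all its translates are congruent — so that each step of the reduction costs only a constant factor depending on $P$. Note first that Theorem~\ref{thm:decompose-main} is literally a unit–duration planar sensor cover statement: a decomposition into $k/\alpha$ covers is a schedule of duration $k/\alpha$ when every sensor has duration $1$, and the load of that instance is exactly the covering depth $k$. So it suffices to produce, for the given family of translates, a schedule (equivalently, a colouring of the translates with colour classes indexed $1,\dots,k/\alpha$) such that every point covered $k$ times has a translate of each colour among its coverers.

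For the geometric reduction I would impose a grid on the plane whose cells are smaller than the shortest edge of $P$ (a single parameter, since translates are congruent) and small relative to the rest of the geometry of $P$. Fix a cell $C$ and a point $x\in C$ covered by $k$ translates. Split these into those containing all of $C$ and those whose boundary meets $C$. If at least half contain $C$, place those into $\lceil k/2\rceil$ singleton colour classes, each of which trivially covers $C$, and $C$ is handled. Otherwise at least $k/2$ of the translates covering $x$ have boundary crossing $C$; since $C$ is shorter than an edge, each such boundary arc lies on a single edge of the corresponding translate, so by pigeonhole at least $k/(2m)$ of them use a common edge, the one with some outward normal $u$. Writing points of $C$ in coordinates along $u^{\perp}$ and along $u$, each of these translates, restricted to $C$, covers a set of the form $\{(p,h) : p\in I_v,\ h\le c_v\}$; organizing $C$ into thin layers transverse to $u$ and bookkeeping the thresholds $c_v$ turns the task of covering $C$ from this direction into a one–dimensional Restricted Strip Cover instance whose ground set is a set of positions along $u^{\perp}$, whose durations are the inherited ones, and whose load is $\Omega(k)$. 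One then applies the constant–factor RSC algorithm to obtain a schedule of duration $\Omega(k)$. (For Theorem~\ref{thm:decompose-main} all durations equal $1$, so this one–dimensional step is just an easy greedy colouring of intervals on a line; the full strength of our RSC algorithm is what the arbitrary–duration sensor–cover generalization requires.)

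The remaining work is to assemble these per–cell, per–direction schedules into one global colouring into $\Omega(k)$ classes. Since a translate can influence coverage only inside cells it meets, and such cells lie within $\mathrm{diam}(P)$ of one another, I anticipate colouring the grid cells with $O(1)$ colours so that cells of a common colour are farther apart than $\mathrm{diam}(P)$; then within a colour class the spawned one–dimensional instances are independent. Splitting the $\Omega(k)$ budget evenly over the $O(1)$ grid–colours, the $m$ edge directions, and the constantly many layers of the construction, and remembering that the "contains $C$'' translates are handled separately, should yield a global decomposition into $\Omega(k)$ covers, with the constant depending only on $P$. I expect the crux to be precisely this geometric organization together with the consistent reassembly: showing that for every deep point a constant fraction of its coverers can be made effectively one–dimensional at the correct scale, that the resulting local instances really are RSC instances, and that the local schedules can be merged without a colour being used inconsistently — and it is the convexity of $P$ and the boundedness of $m$ that keep every loss along the way a constant rather than a function of $k$ or $n$.
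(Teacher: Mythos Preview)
The central gap is global consistency of the colouring, precisely the step you flag as the crux but do not resolve. Each translate must receive a \emph{single} colour, yet a translate meets many grid cells --- at least one of every grid-colour --- and in each cell it may act as a boundary-crosser (possibly for several edge directions) or as a cell-container. Your grid-colouring makes the one-dimensional instances within one grid-colour disjoint, but across grid-colours it does nothing: if you solve grid-colour~$1$ and grid-colour~$2$ independently on disjoint palette chunks, the same translate is asked to lie in both chunks. Processing grid-colours sequentially and discarding already-coloured translates could in principle work, but then you must bound the load lost to later stages, and your sketch provides no such mechanism; the same difficulty recurs inside a single cell across the $m$ edge directions. The ``singletons for translates containing $C$'' step is wrong as stated: a one-element colour class covers $C$ but not the other $k$-deep points of the plane, so it is not a legitimate class of the global decomposition. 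And your one-dimensional reduction along $u^\perp$ is off: a translate whose boundary in $C$ is a single edge intersects $C$ in a half-plane $\{h\le c_v\}$, covering the \emph{entire} $u^\perp$-range of $C$, so every $I_v$ is all of $C$ and there is no nontrivial RSC instance along $u^\perp$.

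The paper avoids the consistency problem by dualising: translate-centres become a point set $Q$, and one colours \emph{points} so that every translate of $\bar P$ containing at least $k$ points of $Q$ sees every colour. Grid cells then partition the objects being coloured, so cells are handled independently with no merging at all; any deep translate already has $\ge k/\beta$ of its points inside one cell, where its trace on the cell is one of the $n$ vertex-wedges. Inside a cell the genuine one-dimensional structure lives on the level curve $\C_i(k)$, not along $u^\perp$: each point of $Q$ yields an honest interval on that curve. The $n$ wedge directions are processed sequentially via $\mbox{computeCover}$, and the key Lemma of Section~\ref{sec:decompose} is exactly the load-loss control (a factor of at most $5n$ per direction) that a sequential scheme requires; that lemma, together with the dualisation that makes it applicable, is the missing ingredient in your plan.
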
  

Our techniques build upon the recent work of Aloupis et al. \cite{acclor08} for centrally
symmetric convex polygons. (A polygon is centrally symmetric with respect to the origin
if whenever it contains point $p$ it also contains $-p$.) A key idea of theirs is to
focus on the level curves corresponding to the wedges at the vertices of $P$. The interaction
of these level curves can be complex, but they show that is sufficient to work within
a region where the interaction is much more controlled. It is only for centrally
symmetric convex polygons that they establish such nice properties of the
interaction. The notion of level curves is also
central to our work, but the main point of departure is the simplicity of the way in
which we handle the level curve interactions. One other important idea we need is 
a generalization of the proof technique of Gibson \cite{g09} for the case of triangles.

\paragraph{Restricted Strip Cover.}
We improve upon the $O(\log \log \log n)$ approximation of $\cite{bej07}$ and
give the first constant factor approximation (a ratio of $5$) for RSC.  The
work of $\cite{bej07}$ starts off with the observation that if all the sensors
have unit duration then it is possible to compute a schedule whose duration
is equal to the load of the instance. The case of non-uniform duration is
handled by reduction to several instances of the uniform duration case. 
The tool used for this is a technique called {\em grouping} where several
sensors of small duration are combined to form one sensor of large
duration. The question of how the groups are to be formed is addressed
in a clever way, but the reduction entails a non-constant loss in the load and 
hence the $O(\log \log \log n)$ approximation factor.

We take a different and conceptually simpler approach here. Our algorithm
is greedy and schedules sensors one by one. The scheduling rule manages to
ensure that we do not have more than 5 sensors overlapping any particular 
point at one time. Hence we obtain a schedule whose duration is at least
a fifth of the load. One idea that the scheduling rule uses is that
if there are two sensors $s$ and $s'$ such that $R(s)$ is strictly contained
in  $R(s')$, we schedule $s'$ before we schedule $s$. Another idea is
to consider the duration of the sensors in an indirect way -- for the
next sensor to be scheduled, the durations of the unscheduled sensors
is irrelevant but only their ranges; however the durations of the already
scheduled sensors does play a crucial rule.  Since our algorithm is greedy
it has a simple implementation with a reasonable running time. We
have not attempted to optimize the factor of $5$ that our analysis
guarantees.

\paragraph{Planar Sensor Cover.}
We give a constant factor approximation for the planar sensor cover problem,
where the range of each sensor is a translate of a convex polygon, improving upon 
the previous best logarithmic factor. Essentially, we show that 
we can obtain a constant factor approximation by invoking several instances
of the RSC, one for each vertex of the convex polygon, which we solve using our 
$5$-approximation. Our greedy algorithm for RSC turns out to be exactly what is needed
to generalize the result of Theorem \ref{thm:decompose-main} to the case of
non-uniform durations.

\paragraph{Organization of the Paper.}
In Section \ref{sec:prelims}, we recall crucial tools from previous work on the
problem of decomposing multiple coverings. In Section \ref{sec:decompose}, we prove
Theorem \ref{thm:decompose-main}. In Section \ref{sec:RSC}, we present our
constant factor approximation for RSC and obtain as a consequence the results
on planar sensor cover.

\section{Decomposing Coverings: Preliminaries}
\label{sec:prelims}
It is convenient to prove Theorem \ref{thm:decompose-main} in its dual form as done in 
\cite{tt07,acclor08,g09}.  Suppose we are given a polygon $P$.  Fix $O$, the centroid of $P$, as the origin in the plane.  For a planar set $T$ and a point $x$ in the plane, let $T(x)$ denote the translate of $T$ with centroid $x$.  Let $\bar{P}$ be the reflection through $O$ of the polygon $P$.  For points $p$ and $x$ in the plane, $p \in P(x)$ if and only if $x \in \bar{P}(p)$.

Because of this transformation, it is sufficient for us to show that there exists a
constant $\alpha \geq 1$ so that for any $k \geq 1$ and any collection $Q$ of points in the
plane, it is possible to assign each point in $Q$ a color from 
$\{1, 2, \ldots, \frac{k}{\alpha} \}$, so that any translate of $\bar{P}$ with
$|\bar{P} \cap Q| \geq k$ contains a point colored $i$, for each $1 \leq i \leq \frac{k}{\alpha}$.

\paragraph{Polygons to Wedges.}
Denote the vertices of $\bar{P}$ to be $p_0, p_1, p_2, \ldots, p_{n-1}$ in counterclockwise 
order.  Addition and subtraction of indices of these vertices will be taken modulo $n$ throughout the paper.  The set of indices between index $i$ and index $j$ in counterclockwise order are denoted $[i,j]$. We now transform the problem further, so that instead of dealing with translates
of $\bar{P}$, we can deal with translates of the $n$ {\em wedges} corresponding to the vertices
of $\bar{P}$ \cite{pt07,tt07,acclor08,g09}. 

Let $c$ be equal to half the minimum distance
between two points on non-consecutive edges of $\bar{P}$. We lay a square grid of side $c$ on 
the plane; any translate of $\bar{P}$ intersects $\beta \in O(1)$ grid cells, and each grid cell 
intersects at most two sides of a translate; moreover, if a grid cell does intersect two sides
of a translate, then these sides must be adjacent in $\bar{P}$   

For a subset (region) $R$ of the plane and for a subset $X$ of points, denote $\load_X(R)$ to be 
the number of points in $X$ that lie in $R$.  We call this value the load of region $R$ with respect to $X$.  Since each translate $\bar{P}(u)$ intersects at most $\beta$ grid cells, 
$\bar{P}(u)$ must contain load at least $k/\beta$ within some grid cell if its load with respect
to $Q$ is at least $k$. We can therefore make the points of $Q$ within such a grid cell 
``responsible'' for $\bar{P}(u)$.

Since each grid cell intersects at most two edges of $\bar{P}(u)$, it must be that the 
intersection of a grid cell with $\bar{P}(u)$ is the same as the intersection of the grid cell 
with a wedge whose boundaries are parallel to two adjacent edges of $\bar{P}(u)$.  If one 
boundary of the wedge is parallel to the edge $p_{i-1}p_i$ of $\bar{P}$ and the other is parallel with $p_ip_{i+1}$ of $\bar{P}$, then we call the wedge an $i$-wedge.  For a point $q$ in the 
plane, we denote $W_i(q)$ to be the $i$-wedge with apex $q$.  See Figure \ref{fig:wedge} for an illustration.

\begin{figure}[htpb]
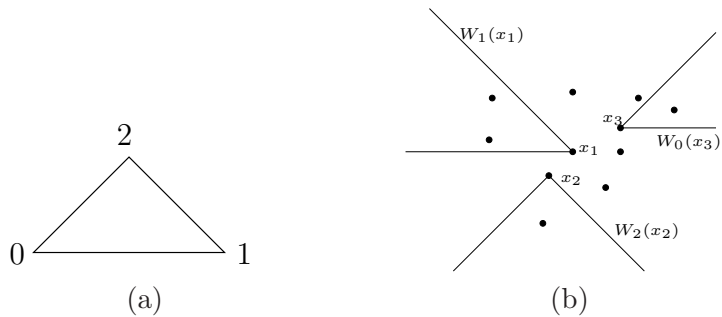

\centering
\begin{tabular}{c@{\hspace{0.1\linewidth}}c}
\input{tri.pstex_t} &
\input{wedges.pstex_t} \\
(a) & (b)
\end{tabular}
\caption{(a) Suppose this triangle is our polygon with vertices indexed accordingly.
(b) A 0-wedge, 1-wedge, and 2-wedge with respect to the polygon.}
\label{fig:wedge}
\end{figure}

Because of these observations, Theorem \ref{thm:decompose-main} is established by applying the 
following theorem to the points $Y$ within each grid cell $G$.  

\begin{theorem}
\label{thm:decompose-wedges}
There exists a constant $\alpha' \geq 1$ so that for any $k \geq 1$ and any collection $Y$ of 
points in the plane, it is possible to assign each point in $Y$ a color from 
$\{1, 2, \ldots, \frac{k}{\alpha'} \}$, so that any $i$-wedge that contains $k$ or
more points from $Y$ contains a point colored $j$, for each $1 \leq j \leq \frac{k}{\alpha}$.
\end{theorem}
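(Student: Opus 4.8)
The plan is to fix a grid cell worth of points $Y$ and prove the wedge-decomposition statement by reducing, in a coloring-preserving way, to a one-dimensional cover problem of the Restricted Strip Cover type, exactly as Theorem~\ref{thm:decompose-wedges} is the combinatorial core that Theorem~\ref{thm:decompose-main} is built from. The key structural observation is that within a single grid cell the $n$ families of $i$-wedges interact in a very controlled fashion: the two bounding directions of an $i$-wedge are fixed (parallel to $p_{i-1}p_i$ and $p_ip_{i+1}$), so ``$q \in W_i(x)$'' is equivalent to a pair of linear inequalities on $x$, i.e. the apex $x$ lies in a fixed translate of the reflected cone. First I would set up, for each vertex index $i$, coordinates adapted to the $i$-wedge so that containment becomes a pair of halfplane conditions; then I would argue that any $i$-wedge containing $k$ points of $Y$ already contains, restricted to the cell, a ``canonical'' sub-wedge whose apex can be taken to be one of the points of $Y$ (or can be perturbed to a point in general position), so that it suffices to handle the $O(|Y|)$ relevant wedges per index.

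The second step is the heart of the argument: for a \emph{single} index $i$, the problem of coloring $Y$ so that every heavy $i$-wedge is rainbow is essentially one-dimensional. Concretely, a translate of a cone with two fixed edge directions, intersected with a bounded region, is determined by sweeping one edge direction as ``time'' and the other as the ``covered interval''; the points of $Y$ that it contains correspond to a sensor-like interval, and the requirement that $k$-heavy wedges contain all $k/\alpha'$ colors is precisely the Restricted Strip Cover guarantee that from load $k$ we can extract $\Omega(k)$ disjoint covers. I would invoke our $5$-approximation for RSC (or, for the uniform-duration incarnation needed here, the fact that an $L$-fold one-dimensional cover by intervals decomposes into $\Omega(L)$ covers) to get, \emph{for that one index}, a coloring of $Y$ with $\Omega(k)$ colors that is rainbow on every $k$-heavy $i$-wedge.

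The third step is to combine the $n$ per-index colorings into one. Here I would use a standard product/iteration trick: assign to each point of $Y$ the tuple of its $n$ colors, or equivalently apply the single-index argument $n$ times, each time refining within color classes, so that the color palette shrinks by a constant factor $(\approx 1/5$, or whatever the RSC/decomposition constant is$)$ at each of the $n$ stages; after all $n$ stages every $k$-heavy $i$-wedge, for every $i$, is rainbow on the final palette of size $k/\alpha'$ with $\alpha' = \Theta(5^n) = \Theta(c_P)$, a constant depending only on $P$. Because the number of sides $n$ is a constant of the polygon, this constant-factor-per-index loss is acceptable. Throughout, the algorithmic claim comes for free: each stage runs the (polynomial-time) RSC/decomposition algorithm, and there are only $n$ stages.

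The main obstacle I expect is the reduction in the first and second steps --- showing cleanly that each $i$-wedge family, restricted to a grid cell, really does behave like a system of one-dimensional intervals with no hidden two-dimensional interaction, and that ``heavy wedge'' translates into ``heavily loaded point'' of the induced RSC instance with the apex-normalization causing at most a constant additional loss. Getting the bookkeeping right so that the heaviness threshold $k$ degrades only by a constant factor at each of the (grid-cell, vertex-index) reductions, rather than by a super-constant amount as in the naive grouping approach of \cite{bej07}, is precisely where the simplicity of our level-curve handling and the greedy RSC algorithm must be brought to bear.
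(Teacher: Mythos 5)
Your steps (1) and (2) essentially reproduce the paper's setup: for a fixed index $i$ it suffices to consider $i$-wedges with apex on the level curve $\C_i(k)$, and the apices of the $i$-wedges containing a given point $q$ form an interval of that curve, so the single-index problem is one-dimensional and a greedy interval cover (no RSC machinery is needed here, since every point has ``duration'' $1$) yields $\Omega(k)$ colors with every heavy $i$-wedge rainbow; this is the procedure $\mbox{computeCover}$ of Section \ref{sec:simple}. The genuine gap is in your step (3), which you dismiss as a ``standard product/iteration trick'' while locating the main obstacle in steps (1)--(2); in fact the combination step is the crux of the theorem, and neither of the two combinations you sketch works. The product coloring fails because a heavy $i$-wedge is only guaranteed to meet all values in coordinate $i$, hence only $k/\alpha$ of the $(k/\alpha)^n$ tuples. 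The refine-within-color-classes version fails because the stage-$i$ coloring is constructed with only $i$-wedges in mind: a heavy $j$-wedge ($j \neq i$) may find all, or none, of its points landing in a single color class of stage $i$, so there is no guarantee that each class restricted to a heavy $j$-wedge retains $\Omega(k)$ points to refine at the next stage. A single $j$-wedge can sweep across many $i$-wedges along $\C_i(k)$, and the points colored at stage $i$ could in principle all pile up inside it.

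What is needed, and what occupies the bulk of the paper's proof, is the invariant that after the stage for index $i$ every heavy $j$-wedge with $j > i$ still contains at least $L/5n$ \emph{uncolored} points. Securing this requires three ingredients that your sketch omits: (a) the stage-$i$ coloring must be partial and overlap-bounded, so that at most $2t = L/32n$ points are colored in any $i$-wedge with apex on $\C_i(k)$, together with the domination property of Observation \ref{obs:partial1}; (b) the stage-$i$ candidate set must withhold the extreme points $X^j(c)$ in each direction $<_j$ for $j \in A_i$, reserving load for later wedge families; and (c) a case analysis of how an $i$-wedge meets a $j$-wedge, which is qualitatively different when $p_i$ and $p_j$ are antipodal --- precisely the configuration that restricted earlier work to centrally symmetric polygons. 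Without an argument of this kind the per-stage degradation of the heaviness threshold is not a constant factor, and the induction over the $n$ vertex indices does not close.
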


We prove Theorem \ref{thm:decompose-wedges} in Section \ref{sec:decompose}.
We assume that the point set $Y$ is in general position -- a line
parallel to a side of $\bar{P}$ contains at most one point in $Y$. It is
straightforward to perturb the input to the original problem so
that this assumption holds for $Y$.

\paragraph{Level Curves.}
We will now define a boundary for an $i \in \{0, 1, \ldots n - 1\}$ and  
positive integer $r$. This boundary has the property that any $i$-wedge placed 
on or ``inside'' the boundary has load at least $r$, and any $i$-wedge 
placed ``outside'' the boundary has load less than $r$.  That is, the number of points in $W_i(x) \cap Y$ for any $x$ inside the boundary or on the 
boundary is at least $r$ and is less than $r$ for any $x$ outside the 
boundary.  This boundary is called a \textit{level curve} \cite{acclor08,g09} and extends the 
definition of \textit{boundary points} \cite{pach86, pt07}.  Let $\W_i^j$ be 
the set of apices of all $i$-wedges $W$ such that $\load_{Y} (W) = j$.  
For each $i = 0, 1, \ldots n-1$, let 
the level curve $\C_i(r)$ be the boundary of the region 
$\W_i^{\geq r} = \bigcup_{j \geq r} \W_i^j$ for each $i = 0, 1, \ldots n-1$.

Note that $\C_i(r)$ is a monotone staircase polygonal path with edges that are parallel to the edges of an $i$-wedge.  See Figure \ref{fig:m1}. We have the 
following observations:

\begin{figure}
\begin{center}
\input{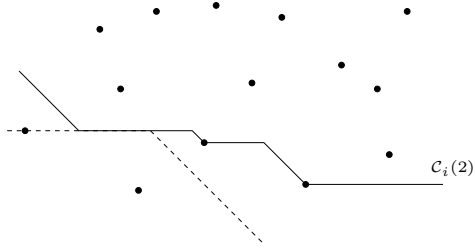}
\caption{An example of a level curve $\C_i(r)$ for $r = 2$.  Note that any 
$i$-wedge with apex on $\C_i(2)$ (e.g. the dotted wedge) contains load at 
least $2$.}
\label{fig:m1}
\end{center}
\end{figure}

\begin{observation}
\label{observ:load}
 For any $x \in \C_i(r)$, $r \leq \load_{Y}(W_i(x)) 
\leq r + 1$.
\end{observation}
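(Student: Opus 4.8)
The plan is to establish the two inequalities by separate arguments: the lower bound $\load_Y(W_i(x)) \geq r$ from the fact that $\W_i^{\geq r}$ is a \emph{closed} region, so that it contains its own boundary $\C_i(r)$; and the upper bound $\load_Y(W_i(x)) \leq r+1$ from a local perturbation that shrinks the wedge generically and loses at most two points of $Y$, thanks to the general position assumption. (We follow the convention that wedges are closed.)

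First I would record the basic monotonicity. Fixing $i$ and writing $W_i(x) = x + W$ for the fixed closed (pointed, convex) wedge $W$ with apex at the origin, we have $W_i(x') \subseteq W_i(x)$ whenever $x' \in x + W$, so $f(x) := \load_Y(W_i(x))$ is non-increasing as the apex moves in a direction of $W$. Since $f(x) = \sum_{y \in Y} \mathbf 1[x \in y - W]$ is a finite sum of indicators of the closed sets $y - W$, it is upper semi-continuous, hence $\W_i^{\geq r} = \{x : f(x) \geq r\}$ is closed. The topological boundary of a closed set is contained in the set, so every $x \in \C_i(r)$ satisfies $f(x) \geq r$; this is the lower bound.

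For the upper bound, fix $x \in \C_i(r)$. As a boundary point of $\W_i^{\geq r}$, it is a limit of points $x_n$ with $f(x_n) \leq r-1$; since the complement of $\W_i^{\geq r}$ is open and stable under translation by elements of $W$ (by the monotonicity of $f$), a short limiting argument gives $x + \mathrm{int}(W) \subseteq \Re^2 \setminus \W_i^{\geq r}$, i.e.\ every apex $z \in x + \mathrm{int}(W)$ has $f(z) \leq r-1$. I would then take such a $z$ very close to $x$ and bound $f(x) - f(z) = \load_Y\big(W_i(x) \setminus W_i(z)\big)$. As $z \to x$ along a fixed direction in $\mathrm{int}(W)$, the ``lost regions'' $W_i(x) \setminus W_i(z)$ form a decreasing family whose intersection is exactly the union of the two boundary rays of $W_i(x)$; since $Y$ is finite, for $z$ close enough the lost region meets $Y$ only on those two rays. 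Each ray is parallel to a side of $\bar P$, so by the general position assumption it carries at most one point of $Y$, whence $f(x) - f(z) \leq 2$ and $\load_Y(W_i(x)) = f(x) \leq (r-1) + 2 = r+1$.

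The one step I expect to be delicate is identifying $\bigcap_z \big(W_i(x) \setminus W_i(z)\big)$, over $z$ approaching $x$ along a fixed ray into $\mathrm{int}(W)$, with the union of the two boundary rays of $W_i(x)$, and then using finiteness of $Y$ to conclude that for nearby $z$ no ``extra'' point of $Y$ survives in the lost region. Everything else — monotonicity, upper semi-continuity, and ``at most one point per ray'' from general position — is routine.
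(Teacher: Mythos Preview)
Your argument is correct. The paper states this as an \emph{Observation} and gives no proof, so there is nothing to compare against; your approach --- closedness of the superlevel set $\W_i^{\ge r}$ for the lower bound, and a perturbation of the apex into $x+\mathrm{int}(W)$ together with the general-position assumption for the upper bound --- is exactly the natural way to make the observation rigorous. The one step you flagged as delicate, identifying $\bigcap_{t>0}\big(W_i(x)\setminus W_i(x+tw_0)\big)$ with the two boundary rays of $W_i(x)$, goes through: since $W$ is a convex cone and $w_0\in\mathrm{int}(W)$, one has $\bigcup_{t>0}(tw_0+W)=\mathrm{int}(W)$, so the nested intersection is precisely $\partial W_i(x)$, and finiteness of $Y$ then lets you pass to a sufficiently small $t$.
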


\begin{observation}
\label{observ:2}
 Any $i$-wedge $W$ such that $\load_{Y}(W) \geq r$  contains an $i$-wedge 
whose apex belongs to $\C_i(r)$.
\end{observation}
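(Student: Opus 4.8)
).}
The plan is to use the fact that moving the apex of an $i$-wedge \emph{into} that wedge can only shrink it. For a fixed $i$, every $i$-wedge is a translate of one fixed cone $C$ (the cone defining the $i$-wedge), which we take to be a closed region, and whose opening angle is strictly less than $\pi$ since its two sides are parallel to the edges $p_{i-1}p_i$ and $p_ip_{i+1}$ of the convex polygon $\bar{P}$. Thus if $z \in W_i(x)$ then $W_i(z) = z + C \subseteq x + C = W_i(x)$, and hence $\load_Y(W_i(z)) \le \load_Y(W_i(x))$. So, given $W = W_i(x)$ with $\load_Y(W) \ge r$, I would fix a direction $u$ pointing into the $i$-wedge, say its angular bisector, and slide the apex along the ray $x + tu$, $t \ge 0$. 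Along this ray $\load_Y(W_i(x+tu))$ is non-increasing, equals at least $r$ at $t=0$, and --- the one geometric fact to verify --- eventually equals $0$: because $C$ has opening angle $< \pi$, for each of the finitely many $q \in Y$ the set $\{\, t \ge 0 : q \in W_i(x+tu)\,\}$ is a bounded interval, so for large $t$ the wedge $W_i(x+tu)$ misses all of $Y$.

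Next I would set $t^{*} = \max\{\, t \ge 0 : \load_Y(W_i(x+tu)) \ge r \,\}$ and put $z = x + t^{*}u$. The maximum is attained because $\W_i^{\ge r}$ is closed: $Y$ being finite, if $z_n \to z_0$ with $\load_Y(W_i(z_n)) \ge r$ for all $n$, then along a subsequence there are $r$ points of $Y$ lying in every $W_i(z_n)$, and, the $i$-wedge being a closed region, these points lie in $W_i(z_0)$ as well; combined with the boundedness from the previous paragraph, this makes $\{\, t \ge 0 : x+tu \in \W_i^{\ge r}\,\}$ compact. Then $z \in \W_i^{\ge r}$ while $x+tu \notin \W_i^{\ge r}$ for every $t > t^{*}$; since such points converge to $z$ as $t \downarrow t^{*}$, the point $z$ is not interior to $\W_i^{\ge r}$ and therefore lies on its boundary $\C_i(r)$. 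Finally $z = x + t^{*}u \in W_i(x)$ because $t^{*}u \in C$, so $W_i(z) \subseteq W_i(x) = W$; thus $W_i(z)$ is an $i$-wedge with apex on $\C_i(r)$ contained in $W$, as required.

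I expect the sliding step to be routine and the genuine care to be needed in exactly two places: (i) showing $\W_i^{\ge r}$ is closed, so that the level curve is actually reached and not merely approached, and (ii) showing the load drops to $0$ along the ray, so that $t^{*}$ is finite --- both relying only on $Y$ being finite and on the $i$-wedge being a closed region of opening angle $< \pi$. Note that $\W_i^{\ge r}$ itself may be unbounded, which is why I slide along a single ray issuing from $x$ rather than optimizing over all of $\W_i^{\ge r}$. The companion Observation \ref{observ:load} would be handled in the same spirit: $x \in \C_i(r) \subseteq \W_i^{\ge r}$ gives $\load_Y(W_i(x)) \ge r$, while the bound $\load_Y(W_i(x)) \le r+1$ follows because $x$ is a limit of apices whose wedges have load at most $r-1$, and in such a limit a wedge gains only points of $Y$ on its two boundary rays, each of which --- being a line parallel to a side of $\bar{P}$ --- carries at most one point of $Y$ by the general position assumption.
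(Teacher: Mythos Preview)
Your argument is correct. In the paper this statement is recorded as an observation and left without proof; it is meant to be immediate from the definition of $\C_i(r)$ as the boundary of $\W_i^{\ge r}$, together with the monotonicity of $i$-wedges under translation of the apex into the wedge. Your write-up is simply a careful unpacking of that intuition: the apex $x$ of $W$ lies in the closed set $\W_i^{\ge r}$, and sliding it along the bisector keeps it inside $W$ while eventually leaving $\W_i^{\ge r}$, so it must cross $\C_i(r)$.

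The two places you flagged as needing care --- closedness of $\W_i^{\ge r}$ and the load dropping to zero along the ray --- are exactly the right points, and your treatment of both is sound. One could shorten the closedness argument slightly by noting that $\{x: q\in W_i(x)\}=q-C$ is closed for each $q\in Y$, so $x\mapsto\load_Y(W_i(x))$ is upper semicontinuous; but your subsequence argument is equally valid. Your remark on Observation~\ref{observ:load} is also correct and uses the general-position assumption in the intended way.
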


Observe that one of the two extreme edges of the level curve 
$\C_i(r)$ is a semi-infinite ray parallel to edge $p_{i-1}p_i$.  Let $\head_i$ denote the first point along this ray such that for all points $y$ on the ray that lie after $\head_i$, $W_i(y)\cap Y = W_i(\head_i)\cap Y$.  We call $\head_i$ the {\em head} of $\C_i(r)$. 
(Because of non-degeneracy, the head is simply the origin of the ray.)
The other extreme edge of $\C_i(r)$ is parallel to
edge $p_i p_{i+1}$.  Let $\tail_i$ denote the first point along this ray such that for all points $y$ on the ray that lie after $\tail_i$, $W_i(y)\cap Y = W_i(\tail_i)\cap Y$.  We call $\tail_i$ the
{\em tail} of $\C_i(r)$. See Figure \ref{fig:m2}.

\begin{figure}
\begin{center}
\input{levelCurve2.pstex_t}
\caption{Level curve $\C_i(r)$ with $\head_i$ and $\tail_i$ denoted.}
\label{fig:m2}
\end{center}
\end{figure}

\subsection{Simple Algorithm for One Level Curve}
\label{sec:simple}
Observation \ref{observ:2} implies that is sufficient to prove Theorem \ref{thm:decompose-wedges}
for the $i$-wedges with apex on $\C_i(k)$, for each $0 \leq i \leq n - 1$. In order to
do this, we will need a procedure that takes as input one level curve $\C_i(k)$, a
positive integer $t$, and a subset $Q \subseteq Y$. The input to the procedure has the
guarantee that for any $i$-wedge $W$ with apex on $\C_i(k)$, we have $|W \cap Q| \geq 2t$.
The goal is to output a partial coloring of the points of $Q$ with colors 
$\{1, 2, \ldots, t\}$ so that any $i$-wedge $W$ with apex on $\C_i(k)$ (a) contains
a point colored $j$, for $1 \leq j \leq t$, and (b) contains at most $2t$ colored points.

It is known \cite{acclor08} that such a procedure exists. The reason is that for any $q \in Q$,
the set $I(q) = \{u \in \C_i(k) | q \in W_i(u)\}$ of apexes of $i$-wedges containing $q$  is
an ``interval'' of $\C_i(k)$.  See Figure \ref{fig:interval} for an illustration. We consider
these intervals in an order such that if interval $I$ properly contains interval $I'$, then
we consider $I$ before $I'$. Considering intervals in such an order, we add an interval
into our working set if it covers a point of $\C_i(k)$ that is not covered by previous 
intervals in the working set. Notice that after all intervals have been considered, the 
working set forms a cover of $\C_i(k)$. Now, we repeatedly throw out intervals from the working
set that are redundant -- an interval is redundant if throwing it out of the current working
set does not affect coverage of $\C_i(k)$. 

The final non-redundant working set covers $\C_i(k)$, but also has no more than two intervals
covering any point of $\C_i(k)$. We give the color $1$ to the points in $Q$ that give rise to 
the intervals in our working set. We repeat this process $t - 1$ more times. It is easy to
verify that the overall procedure, which we call $\mbox{computeCover}(i,Q,t)$,
successfully achieves properties (a) and (b). We have the following observation whose second
claim easily follows from the manner in which we pick our non-redundant working set.   

\begin{observation}
\label{obs:partial1}
The partial cover computed by $\mbox{computeCover}(i,Q,t)$ has the property that any
$i$-wedge with apex on $\C_i(k)$ has at most $2t$ colored points. Furthermore, if
$q$ and $q'$ are points in $Q$ such that $q \in W_i(q')$ (that is, $I(q)$ properly contains
$I(q')$), then $q'$ is colored only if $q$ is colored.
\end{observation}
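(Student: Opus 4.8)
\medskip
\noindent\textbf{Proof plan.}
I would handle the two assertions separately. The first is essentially a restatement of property~(b). Fix an $i$-wedge $W_i(u)$ with $u\in\C_i(k)$. A point $q\in Q$ lies in $W_i(u)$ precisely when $u\in I(q)$, so the number of colored points of $Q$ inside $W_i(u)$ equals the sum over $j=1,\dots,t$ of the number of points colored $j$ whose interval contains $u$. For a fixed $j$, the points colored $j$ are exactly those whose intervals form the non-redundant working set produced in round $j$; since that set is a non-redundant cover of the path $\C_i(k)$, at most two of its intervals pass through any single point of $\C_i(k)$ (the fact recalled just before the observation). Hence at most two points colored $j$ lie in $W_i(u)$, and summing over the $t$ rounds gives the bound $2t$.

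For the ``furthermore'' part the plan is to prove a single-round statement and then chain it over the rounds. The single-round claim I would establish is: during the execution of any one round, if $a$ and $a'$ are both still uncolored at the start of that round and $I(a)$ properly contains $I(a')$, then $a'$ does not get colored in that round. Granting this, suppose $q'$ receives its color in round $j$. If $q$ were still uncolored at the start of round $j$, then, using the hypothesis that $I(q)$ properly contains $I(q')$, the single-round claim applied with $a=q$ and $a'=q'$ would say $q'$ is not colored in round $j$ --- a contradiction. So $q$ must already be colored by the end of round $j-1$; in particular $q$ is colored, as required.

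To prove the single-round claim, note first that since $I(a)$ properly contains $I(a')$, every admissible processing order for the round considers $I(a)$ before $I(a')$. When $I(a)$ is considered, either it is inserted into the working set, or it is not --- and the latter can only happen because every point of $I(a)$ is already covered by intervals inserted before it. In the first case, at the time $I(a')$ is considered the interval $I(a)$ is in the working set and $I(a')\subseteq I(a)$, so $I(a')$ covers no uncovered point and is not inserted. In the second case, every point of $I(a')\subseteq I(a)$ is already covered when $I(a)$ is considered, and since the working set only grows during the insertion phase the same is true when $I(a')$ is later considered, so again $I(a')$ is not inserted. Thus $I(a')$ never enters the round's working set, and because the redundancy-removal phase only deletes intervals, $I(a')$ is absent from the final non-redundant working set; hence $a'$ is not colored in this round.

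I expect the only real content to be two small observations inside this last argument: that failing to be inserted in the first phase is irreversible, so the subsequent removal phase cannot ``resurrect'' $a'$; and that the conclusion we actually get for $q$ is that it was colored in an \emph{earlier} round, not necessarily in the same round as $q'$. Everything else is bookkeeping, and in particular I would not need to reprove the ``at most two intervals per point'' property of non-redundant interval covers, which is already invoked in the text.
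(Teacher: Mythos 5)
Your proof is correct and follows exactly the route the paper intends: the $2t$ bound comes from the ``at most two intervals per point'' property of each round's non-redundant cover summed over the $t$ rounds, and the domination property comes from the containment-respecting processing order, which the paper itself only asserts ``easily follows.'' Your single-round claim and the chaining over rounds supply precisely the details the paper omits.
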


\begin{figure}
\begin{center}
\input{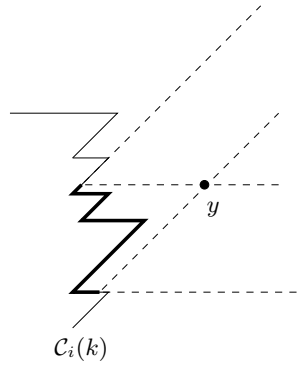}
\caption{An example of an interval $I(y)$ (in bold).  Note that the $i$-wedges with apex on $\C_i(k)$ that contain $y$ are the dotted wedges and all wedges with apex ``in between'' the apices of the dotted wedges.}
\label{fig:interval}
\end{center}
\end{figure}

\section{Decomposing Covers: Algorithm}
\label{sec:decompose}
To prove Theorem \ref{thm:decompose-wedges}, we present an algorithm that produces the
desired coloring. This is Algorithm \ref{alg:main}, but we need to define a notion before
it is fully specified.

Consider the natural linear ordering of the lines parallel to side $p_ip_{i+1}$ of $P$ (addition taken modulo $n$) with the line through vertices $p_i$ and $p_{i+1}$ being smaller than the line through any of the other vertices of $P$.  For $x,y \in \Re^2$, we define the partial order $<_i$ such that $x <_i y$ if the $p_ip_{i+1}$ parallel line through $x$ is less than the $p_ip_{i+1}$ parallel line through $y$.  

For a vertex $p_i$, let $A_i$ denote the set of all indices $j$ such that the intersection of $P$ with the line parallel to $p_jp_{j+1}$ and through $p_i$ is only the point $p_i$.  See Figure \ref{fig:Ai} for an example.

\begin{figure}
\begin{center}
\input{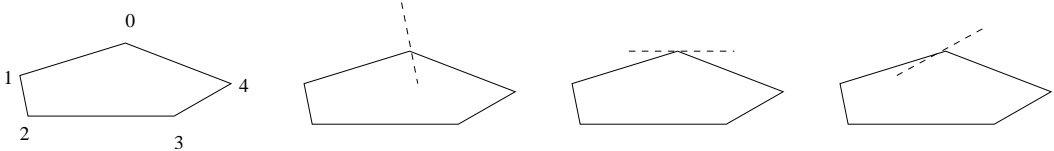}
\caption{An example of a set $A_i$.  In this example, $A_0 = \{2\}$ since only the side of $P$ parallel with $p_2p_3$ has the qualifying property.}
\label{fig:Ai}
\end{center}
\end{figure}

\begin{algorithm*}[ht]
	\caption{}
	\begin{algorithmic}[1]

	\STATE $Y' \leftarrow Y$
	\FOR{each $i \in \{0,1,2,\ldots,n-1\}$}
	   \STATE $L \leftarrow \text{min} \{\load_{Y'}(W_z(x)) : x \in \C_z(k)$ and $z = i, i+1, \ldots, n-1\}$
           %\STATE $X_i \leftarrow Y'$
	   \FOR{each $c \in \C_i(k)$}
	      \FOR{each $j \in A_i$}
		  \STATE Let $X^j(c)$ be the first $\frac{L}{2n}$ points in $W_i(c) \cap Y'$ in decreasing order with respect to the ordering $<_j$.
	      \ENDFOR
		\STATE $X(c) \leftarrow \{ W_i(c) \cap Y' \} \setminus \bigcup_{j \in A_i} X^j(c)$
	   \ENDFOR
	   \STATE \label{line:xi} $X_i \leftarrow \bigcup_{c \in \C_i(k)} X(c)$
	   \STATE Run computeCover($i$, $X_i$, $\frac{L}{64n}$).
	   \STATE Let $Y'$ denote the unscheduled points.
        \ENDFOR
	\end{algorithmic}
\label{alg:main}
\end{algorithm*}

Algorithm \ref{alg:main} calls $\mbox{computeCover}(i,X_i,t_i)$
for each $0 \leq i \leq n-1$. The set $X_i$ in the $i$-th iteration is an appropriately chosen 
subset of the points in $Y$ not colored in iterations $0,1, \ldots, i-1$. At the
beginning of the $i$-th iteration, let $L$ denote, as in the algorithm, 
the smallest number of uncolored points in a $j$-wedge with apex on $\C_j(k)$, for
$i \leq j \leq n - 1$.
The parameter $t_i$ is chosen to be $\frac{L}{64n}$, and we have 
$\min_{c \in \C_i(k)} |W_i(c) \cap X_i| \geq \frac{L}{2}$ (due to the manner in which
$X_i$ is chosen in the algorithm, see Observation 
\ref{obs:load} below). After the call to
 $\mbox{computeCover}(i,X_i,\frac{L}{64n})$, any $i$-wedge with apex on $\C_i(k)$
contains points colored $1, 2,\ldots, L/64n$.
Thus, the algorithm  produces a coloring as required in Theorem \ref{thm:decompose-wedges},
provided $L \in \Omega(k)$. This is established by the following key Lemma. It states that 
$L$, which equals $k$ before the $0$-th iteration, drops by a factor of at most $5n$
with each iteration.

\begin{lemma}
 Suppose at the beginning of iteration $i$, all $j$-wedges with apex on $\C_j(k)$ have load at least $L$ from points in $Y'$ for $j \geq i$, where $L$ is larger than some absolute constant.
(Note that $Y'$ always denotes the uncolored points in the algorithm.) After the $i$-th 
iteration of the algorithm, any $j$-wedge $W_j(x)$, for $j > i$, and with apex $x$ on 
$\C_j(k)$ has load at 
least $\frac{L}{5n}$ from points in $Y'$.
\end{lemma}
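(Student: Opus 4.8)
The plan is to track, for a fixed $j > i$ and a fixed apex $x \in \C_j(k)$, how many points of $W_j(x) \cap Y'$ survive the $i$-th iteration. The points removed from $Y'$ during iteration $i$ are exactly those colored by $\mbox{computeCover}(i, X_i, L/64n)$, and by Observation~\ref{obs:partial1} (property (b)) these lie inside $i$-wedges with apex on $\C_i(k)$, with at most $2 \cdot (L/64n) = L/32n$ colored points in any such wedge. So the first step is to understand how many distinct $i$-wedges with apex on $\C_i(k)$ it takes to cover $W_j(x) \cap Y'$ — if that number is $\beta$, then at most $\beta \cdot (L/32n)$ points of $W_j(x)$ get colored, leaving at least $L - \beta L/32n$ behind. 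To get the claimed $L/5n$ bound we cannot afford $\beta$ to be large, so the real content must be that we only need to worry about a \emph{bounded} number of $i$-wedges, OR that the set $X(c)$ is deliberately thinned so that the $X^j(c)$ points (the ones excluded from $X_i$, hence never colored in this round) are precisely the points of $W_j(x)$ that a far-away $i$-wedge could grab.

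The key geometric idea I would develop is this: because $j \in A_i$ whenever the line parallel to $p_j p_{j+1}$ through $p_i$ meets $P$ only at $p_i$ — and the lemma's conclusion is about $j > i$ — the relevant direction $<_j$ is one along which the $i$-wedge $W_i(c)$ is "pointed." Concretely, I expect the following to hold: if $q \in W_j(x) \cap W_i(c)$ for some $c \in \C_i(k)$, and $q$ is \emph{not} among the first $L/2n$ points of $W_i(c) \cap Y'$ in decreasing $<_j$-order, then in fact $W_i(c)$ must be one of a constantly-bounded collection of wedges — because once $c$ moves "past" $x$ in the appropriate sense, all of $W_i(c) \cap W_j(x)$ gets captured in the top $L/2n$ points $X^j(c)$ and so is removed from $X_i$. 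Equivalently: the points of $W_j(x)$ that land in $X_i$ and are therefore eligible to be colored come only from $i$-wedges $W_i(c)$ whose apex $c$ is "$<_j$-below" or "near" $x$; there are $O(1)$ such wedges needed to cover them. Making this precise — pinning down exactly which $c$'s contribute and bounding their number by an absolute constant (I would guess something like $4$, giving $4 L/32n \le L/8n$ removed, hence $\ge L/5n$ surviving after bookkeeping) — is the heart of the argument.

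In more detail, the steps I would carry out are: (1) Fix $j > i$ and $x \in \C_j(k)$; note $W_j(x) \cap Y'$ has $\ge L$ points at the start of iteration $i$. (2) Using Observation~\ref{obs:load}-type reasoning, argue $|W_i(c) \cap X_i| \ge L/2$ for $c \in \C_i(k)$ (this is the cited Observation~\ref{obs:load}), so the $X^j(c)$-thinning removes at most $n \cdot L/2n = L/2$ points, a controlled loss. (3) Establish the crucial \emph{containment claim}: if $q \in W_j(x) \cap W_i(c) \cap X_i$, then $c$ lies in a specific bounded region relative to $x$ — I would argue that if $c$ is too far along $\C_i(k)$ in the $<_j$ direction, then $W_i(c) \cap W_j(x) \subseteq X^j(c)$, contradicting $q \in X_i$. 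This uses that $j \in A_i$: the $p_jp_{j+1}$-parallel line is "transverse" to the $i$-wedge's apex, so moving $c$ translates $W_i(c) \cap W_j(x)$ monotonically in $<_j$, and once it is entirely above the $<_j$-threshold the whole intersection is thinned away. (4) Conclude that $W_j(x) \cap X_i$ is covered by $O(1)$ many $i$-wedges with apex on $\C_i(k)$; each contributes $\le L/32n$ colored points; so at most $O(L/n)$ points of $W_j(x)$ are colored in iteration $i$, and with the constants chosen ($L/64n$, $L/2n$) this is at most $L - L/5n$, giving the bound. (5) Finally observe the argument is symmetric/uniform over all $x \in \C_j(k)$ and all $j > i$, completing the induction step.

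The main obstacle I anticipate is step (3) — the containment claim — and in particular getting the constant right. The delicate point is that there are $|A_i|$ different directions being thinned simultaneously in the definition of $X(c)$, and I must show that for the \emph{particular} $j > i$ under consideration, the direction $<_j$ (with $j \in A_i$, which I'd need to verify always holds for $j > i$, or else handle the $j \notin A_i$ case separately — this could be a subtlety) is the one that controls the geometry of $W_i(c) \cap W_j(x)$. I would also need to be careful that removing the top $L/2n$ points in $n$ different orderings doesn't interact badly (the $\bigcup_{j \in A_i}$ could remove up to $|A_i| \cdot L/2n$ points from a single $W_i(c)$, which is why the $L/2$ lower bound in step (2) matters and why $|A_i| \le n$ is used). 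If the clean "$O(1)$ wedges suffice" statement fails, the fallback is to show directly that the \emph{total} number of $W_j(x)$-points colored across all contributing $i$-wedges is small, using that colored points within a single $i$-wedge number $\le L/32n$ and that the contributing wedges, though possibly many, have colored points that (by the non-redundancy of $\mbox{computeCover}$, Observation~\ref{obs:partial1}) overlap $W_j(x)$ in a controlled total.
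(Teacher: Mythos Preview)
Your proposal has a genuine gap. The single mechanism you lean on --- the $X^j(c)$ thinning --- handles only one sub-case of the argument, and the remaining cases require an idea you never invoke: the \emph{domination} clause of Observation~\ref{obs:partial1} (if $q \in W_i(q')$ then $q'$ is colored only if $q$ is). Moreover, $j \in A_i$ does \emph{not} hold for all $j > i$; it holds roughly when $p_i$ and $p_j$ are antipodal, and the paper splits accordingly. In the non-antipodal case the thinning is irrelevant: one uses domination together with the tail $\tau_i$ of $\C_i(k)$ to show that $W_j(x) \cap Y_i \subseteq (W_i(z) \cup W_i(\tau_i)) \cap Y_i$ for a single boundary-crossing point $z$, so at most $2L/(32n)$ points of $W_j(x)$ are removed. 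You cite Observation~\ref{obs:partial1} only for the per-wedge bound and never use its second clause, which is the real engine in this case.

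Even in the antipodal case your step (3) does not go through. Knowing $q \in X_i$ only says $q \in X(c')$ for \emph{some} $c'$; it does not contradict $q \in X^j(c)$ for the particular $c$ you chose, so the implication ``$W_i(c) \cap W_j(x) \subseteq X^j(c)$, hence $q \notin X_i$'' fails. More fundamentally, $W_j(x) \cap X_i$ can genuinely be spread across many apices along $\C_i(k)$ --- there is no $O(1)$ cover of it by $i$-wedges. The paper instead partitions $W_j(x)$ into three pieces $W_j^1, W_j^2, W_j^3$. For $W_j^1$ the domination clause again pins all colored points into a \emph{single} $W_i(z')$. Only for $W_j^2$ (and symmetrically $W_j^3$) does the thinning enter, and there the logic runs \emph{opposite} to yours: since the lowest (in $<_i$) colored point $v_\ell \in W_j^2(x)$ lies in some $X(u)$, the excluded block $X^{j-1}(u)$ of size $L/(2n)$ sits above $v_\ell$ and hence inside $W_j^2(x)$; as at most $L/(32n)$ points of $W_i(u)$ get colored, at least $L/(2n) - L/(32n) > L/(5n)$ of those survive. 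So the thinning is used to \emph{exhibit surviving points} inside $W_j(x)$, not to localize the colored ones to $O(1)$ wedges.
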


\begin{proof}
 There are two main cases to consider:

\begin{itemize}
 \item $p_i$ and $p_j$ are antipodal vertices of $\bar{P}$ -- that is, there are parallel lines
       through $p_i$ and $p_j$ with the convex polygon sandwiched between them.
 \item $p_i$ and $p_j$ are not antipodal vertices of $P$.
\end{itemize}

We use the following terminology for
iteration $i$: if for two distinct points $q$ and $q'$, 
if $W_i(q) \subseteq W_i(q')$, we say that $q$ {\em dominates} $q'$.
Notice that if $q$ and $q'$ are both unscheduled before iteration $i$,
then $q'$ is scheduled in iteration $i$ only if $q$ is already scheduled. (This
is Observation \ref{obs:partial1}.)
For the rest of this proof, let $Y'$ denote the points that are
not scheduled just before iteration $i$, let $X_i$ denote set of candidate points that are eligible to be scheduled in iteration $i$ (as constructed in the algorithm), and let $Y_i$ denote
the points that are actually scheduled in iteration $i$. 

The analysis will rely heavily upon the following two observations.

\begin{observation}
\label{obs:overcover}
For any $z \in \C_i(k)$, we have that $\load_{Y_i}(W_i(z)) \leq \frac{L}{32n}$.   
\end{observation}

\begin{observation}
\label{obs:load}
For any $z \in \C_i(k)$, $\load_{X_i}(W_i(z)) \geq \frac{L}{2}$.
\end{observation}

Observation \ref{obs:overcover} is a consequence of Observation \ref{obs:partial1}.  
To see Observation \ref{obs:load}, note that for each $c \in \C_i(k)$, $\load_{X_i}(W_i(c)) \geq \load_{X(c)}(W_i(c)) \geq L - n\cdot \frac{L}{2n} = \frac{L}{2}$.

%\ref{obs:partial}

{\noindent {\bf Case 1:} $p_i$ and $p_j$ are not antipodal vertices of $P$.}

Let $W_j(x)$ be as in the statement of the Lemma. The argument is trivial if 
$W_j(x) \cap \C_i(k) = \emptyset$. So let us assume
that $W_j(x) \cap \C_i(k) \neq \emptyset$.  
There are two cases -- in the first, we encounter $p_j$ after $p_i$ and before the vertices
antipodal to $p_i$ when walking counter-clockwise around $\bar{P}$, and in the second,
we encounter $p_j$ after the vertices antipodal to $p_i$ and before $p_i$. We will focus
on the first case, since the other is symmetric.
Let $z$ be the intersection point
of the boundary of $W_j(x)$ and $\C_i(k)$.  
If $W_j(x)$ does not contain in its interior the tail $\tail_i$ of the level 
curve $\C_i(k)$, then $W_j(x) \cap Y_i \subseteq W_i(z) \cap Y_i$, and so
$\load_{Y_i}(W_j(x)) \leq \load_{Y_i}(W_i(z)) \leq \frac{L}{32n}$.
It follows that the load of unscheduled points in $W_j(x)$ after
iteration $i$ is at least
\[ L - \frac{L}{32n} > \frac{L}{5n}. \]

Let us therefore assume that $W_j(x)$ does contain in its interior the tail
$\tail_i$ of $\C_i(k)$. See Figure \ref{fig:m3}. Let $a$ denote the 
point where the boundaries of the wedges $W_i(z)$ and $W_i(\tail_i)$ 
intersect. If $\load_{X_i}(W_i(a)) \geq \frac{L}{16n}$, then since
$\load_{Y_i}(W_i(a)) \leq \load_{Y_i}(W_i(\tail_i)) \leq \frac{L}{32n}$,
there are unscheduled points in $W_i(a)$ after iteration $i$. Since
any point in $W_i(a)$ dominates points in $W_j(x) \cap Y_i$ that are not
contained in $W_i(z) \cup W_i(\tail_i)$, we conclude that 
$W_j(x) \cap Y_i \subseteq (W_i(z) \cup W_i(\tail_i)) \cap Y_i$. Thus,
\[\load_{Y_i}(W_j(x)) \leq \load_{Y_i}(W_i(z)) + \load_{Y_i}(W_i(\tail_i))
\leq \frac{L}{16n}.\]

\noindent Therefore there must be at least $L - \frac{L}{16n} > \frac{L}{5n}$ unscheduled points left in $W_j(x)$.

\begin{figure}
\begin{center}
\input{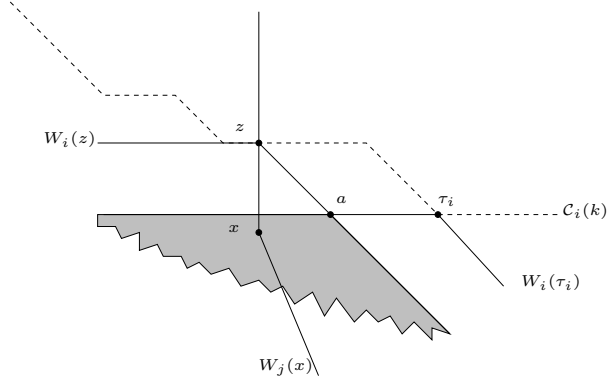}
\caption{Illustration for the nonantipodal case.}
\label{fig:m3}
\end{center}
\end{figure}

Let us therefore consider the case where 
$\load_{X_i}(W_i(a)) < \frac{L}{16n}$. This means that
$\load_{X_i}(W_i(\tail_i) \setminus W_i(a)) > \frac{L}{2} - \frac{L}{16n} > \frac{L}{3}$.  Again,
$\load_{Y_i}(W_i(\tail_i) \setminus W_i(a)) \leq \load_{Y_i}(W_i(\tail_i)) 
\leq \frac{L}{32n}$, and this means the load of the points in
$W_i(\tail_i) \setminus W_i(a)$ that are unscheduled after iteration
$i$ is at least $\frac{L}{3} - \frac{L}{32n}
> \frac{L}{5n}$. But $W_i(\tail_i) \setminus W_i(a) \subseteq
W_j(x)$, and this means that the load of the unscheduled points in $W_j(x)$
after iteration $i$ is at least $\frac{L}{5n}$.

{\noindent {\bf Case 2:} $p_i$ and $p_j$ are antipodal vertices of $P$.}

Again, the argument is trivial if $W_j(x) \cap \C_i(k) = \emptyset$. So let us assume
that $W_j(x) \cap \C_i(k) \neq \emptyset$.  Since $\load_{Y'}(W_j(x)) \geq L$, if $\load_{X_i}(W_j(x)) \leq \frac{L}{2}$ then $W_j(x)$ will clearly have load at least $\frac{L}{5n}$ after iteration $i$.  So assume that $\load_{X_i}(W_j(x)) > \frac{L}{2}$.

Consider the line parallel with $p_{i-1}p_i$ through $x$ and the line parallel with $p_ip_{i+1}$ through $x$.  Note these lines are parallel with the boundaries of an $i$-wedge.  Let $H_t(x)$ denote the halfplane consisting of all points $y$ such that $y \leq_t x$.  Let $W_j^1(x) = H_{i-1}(x) \cap H_i(x) \cap W_j(x)$.  Let $W_j^2(x) = (H_{i}(x) \cap W_j(x)) \setminus W_j^1(x)$.  Let $W_j^3(x) = (H_{i-1}(x) \cap W_j(x)) \setminus W_j^1(x)$.  See Figure \ref{fig:antipodalWedges} and Figure \ref{fig:antipodalWedges2} for an illustration.  Note that $W_j^1(x), W_j^2(x),$ and $W_j^3(x)$ form a partition of $W_j(x)$.  Also note that $W_j^1(x)$ cannot be empty but $W_j^2(x)$ or $W_j^3(x)$ could be empty.  Since these three sets form a partition of $W_j(x)$ and $\load_{X_i}(W_j(x)) > \frac{L}{2}$, it must be that one of the three sets has load at least $\frac{L}{6}$ from $X_i$.  We first handle the case when $\load_{X_i}(W_j^1(x)) \geq \frac{L}{6}$ and then conclude the proof with the case when $\load_{X_i}(W_j^2(x)) \geq \frac{L}{6}$.  The case when $\load_{X_i}(W_j^3(x)) \geq \frac{L}{6}$ has a symmetric proof with the $W_j^2(x)$ case.

\begin{figure}[htpb]
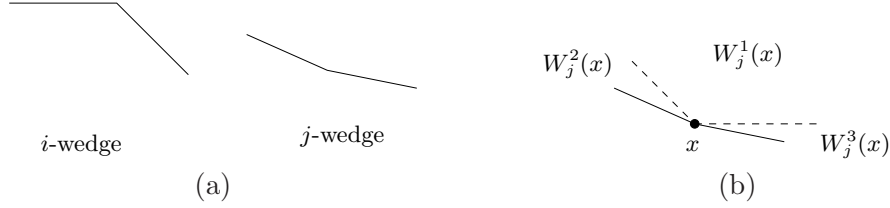

\centering
\begin{tabular}{c@{\hspace{0.1\linewidth}}c}
\input{antipodalWedges1.pstex_t} &
\input{antipodalWedges2.pstex_t} \\
(a) & (b)
\end{tabular}
\caption{If we are working with the corresponding $i$-wedge and $j$-wedge (part (a)), then we obtain the corresponding $W_j^1(x)$, $W_j^2(x)$, and $W_j^3(x)$ (part (b)).}
\label{fig:antipodalWedges}
\end{figure}

\begin{figure}[htpb]
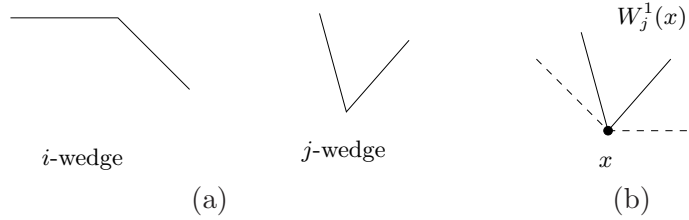

\centering
\begin{tabular}{c@{\hspace{0.1\linewidth}}c}
\input{antipodalWedges3.pstex_t} &
\input{antipodalWedges4.pstex_t} \\
(a) & (b)
\end{tabular}
\caption{If we are working with the corresponding $i$-wedge and $j$-wedge (part (a)), then $W_j^1(x) = W_j(x)$ and $W_j^2(x) = W_j^3(x) = \emptyset$ (part (b)).}
\label{fig:antipodalWedges2}
\end{figure}

{\noindent {\bf Case 2(a):} $\load_{X_i}(W_j^1(x)) \geq \frac{L}{6}$.}

Consider any point $z \in W_j^1(x) \cap \C_i(k)$.  Let $a_z$ denote the 
``leftmost'' point where the boundaries of $W_j^1(x)$ and $W_i(z)$ intersect, 
and let $b_z$ denote the ``rightmost'' point where the boundaries of $W_j^1(x)$ 
and $W_i(z)$ intersect.  Let $R_z$ be the quadrilateral with vertices 
$a_z, x, b_z,$ and $z$. That is, $R_z = W_j^1(x) \cap W_i(z)$.  Suppose that $\load_{X_i}(R_z) \geq  \frac{L}{5n} + \frac{L}{32n}$. 
Again, since $\load_{Y_i}(W_i(z)) \leq \frac{L}{32n}$, and
all points in $R_z$ are in $W_i(z)$, $R_z$ contains (unscheduled) load at least
$\frac{L}{5n}$ after iteration $i$.
Since $R_z \subseteq W_j^1(x)$, $W_j^1(x)$ contains (unscheduled) load at least
$\frac{L}{5n}$ after iteration $i$, and we are done.
See Figure \ref{fig3} for an illustration.

\begin{figure}
\begin{center}
\input{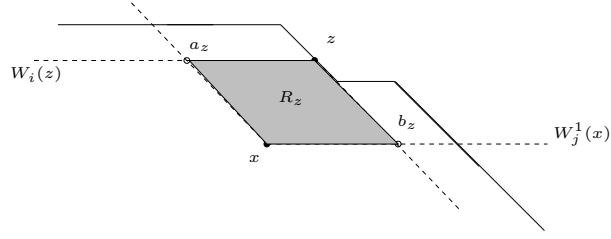}
\caption{Illustration for Case 2(a): the region $R_z$.  Note that although this figure is drawn with respect to a scenario as in Figure \ref{fig:antipodalWedges}, the analysis still holds for the scenario as in Figure \ref{fig:antipodalWedges2} (i.e. when the boundaries of an $i$-wedge are not parallel with the boundaries of $W_j^1(x)$.}
\label{fig3}
\end{center}
\end{figure}

So we can assume that 
$\load_{X_i}(R_z) \leq  \frac{L}{5n} + \frac{L}{32n}$ for each 
$z \in W_j^1(x) \cap \C_i(k)$.  Since $\load_{X_i}(W_i(z)) \geq \frac{L}{2}$, we must have
$\load_{X_i}( W_i(a_z) \cup W_i(b_z)) \geq
\frac{L}{2} - (\frac{L}{5n} + \frac{L}{32n}) > \frac{L}{8}$.  Let $z_1$ be the ``leftmost'' point on $\C_i(k) \cap W_j^1(x)$, 
and let $z_2$ be the ``rightmost'' point on $\C_i(k) \cap W_j^1(x)$.
Notice that $a_{z_1}$ is just $z_1$ itself, and so
$\load_{X_i}(W_i(a_{z_1})) \geq \frac{L}{2}$.  Similarly, $\load_{X_i}(W_i(b_{z_2})) \geq \frac{L}{2}$.
Let $z'$ be the last point on $\C_i(k)$, while walking from $z_1$
to $z_2$, such that $\load_{X_i}(W_i(a_{z'})) \geq \frac{L}{16}$. Thus
\[\load_{X_i}(W_i(b_{z'})) \geq \load_{X_i}(W_i(a_{z'}) \cup W_i(b_{z'})) - \frac{L}{16} \geq \frac{L}{8} -  \frac{L}{16} = \frac{L}{16}.\]
See Figure \ref{fig4} for an illustration.

\begin{figure}
\begin{center}
\input{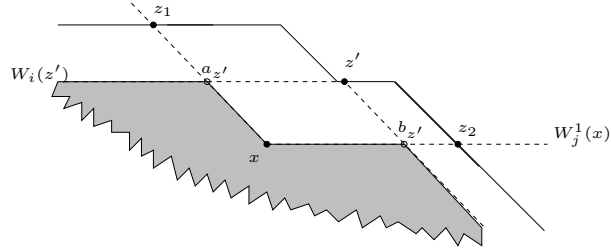}
\caption{Illustration for Case 2(a): the constructed point $z'$.}
\label{fig4}
\end{center}
\end{figure}

Now consider any point $z'' \in W_j^1(x) \setminus W_i(z')$.  
It must be that $W_i(z'')$ either contains $W_i(a_{z'})$ or 
contains $W_i(b_{z'})$ which both have load in $X_i$ of at 
least $\frac{L}{16}$.  Suppose that $W_i(z'')$ contains 
$W_i(a_{z'})$; the other case is similar.
The points in $W_i(a_{z'})$ dominate $z''$ and 
we will not schedule $z''$ in iteration $i$ until we have scheduled all points
in $W_i(a_{z'}) \cap X_i$.
But since $\load_{Y_i}(W_i(a_{z'})) \leq \frac{L}{32n} < \frac{L}{16} \leq
\load_{X_i}(W_i(a_{z'}))$, this means we will not schedule $z''$.

It follows that $W_j^1(x) \cap Y_i \subseteq W_i(z') \cap Y_i$, and
thus $\load_{Y_i}(W_j^1(x)) \leq  \load_{Y_i}(W_i(z')) \leq \frac{L}{32n}$.
And so the load of unscheduled points in $W_j^1(x)$ after iteration
$i$ is at least 
$L - \frac{L}{32n} \geq \frac{L}{5n}$.

{\noindent {\bf Case 2(b):} $\load_{X_i}(W_j^2(x)) \geq \frac{L}{6}$.}

If $W_j^2(x) \cap \C_i(k) = \emptyset$, then the lemma trivially holds.  So for now on, we will assume $W_j^2(x) \cap \C_i(k) \neq \emptyset$.  Let $z \in \C_i(k)$ be a point such that $W_i(z) \cap W_j^2(x) \neq \emptyset$.  Note that both $W_j^2(x)$ and $W_i(z)$ have a boundary parallel with the side $p_{i-1}p_i$.  There are only two types of intersections between these two wedges:

\begin{enumerate}
  \item $z \in W_j^2(x)$, the boundary of $W_j^2(x)$ parallel with the side $p_{j-1}p_{j}$ intersects both boundaries of $W_i(z)$, and the boundary of $W_j^2(x)$ parallel with the side $p_{i-1}p_{i}$ does not intersect with $W_i(z)$.
  \item $x \in W_i(z)$, the boundary of $W_i(z)$ parallel with the side $p_ip_{i+1}$ intersects both boundaries of $W_j^2(x)$, and the boundary of $W_i(z)$ parallel with the side $p_{i-1}p_{i}$ does not intersect with $W_j^2(x)$.
\end{enumerate}

See Figure \ref{fig:intersect1} for an illustration.

\begin{figure}[htpb]
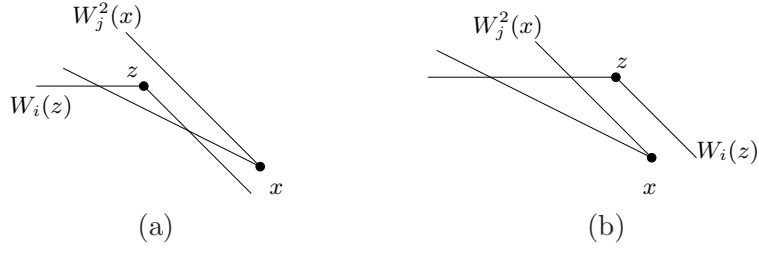

\centering
\begin{tabular}{c@{\hspace{0.1\linewidth}}c}
\input{type3.pstex_t} &
\input{type4.pstex_t} \\
(a) & (b)
\end{tabular}
\caption{Illustration for case 2(b): (a) A type 1 intersection.
(b) A type 2 intersection.}
\label{fig:intersect1}
\end{figure}

Let $\{v_1, v_2, v_3, \ldots \}$ denote the points in $X_i \cap W_j^2(x)$ in decreasing order according to the partial order $<_i$.  Let 
$\ell = \textrm{max}\{t | v_t \in Y_i\}$.  If $\ell \leq \frac{L}{10}$ then $\load_{Y'}(W_j(x)) \geq \frac{L}{6} - \frac{L}{10} > \frac{L}{5n}$ after iteration $i$, so assume that $\ell > \frac{L}{10}$.

Since $v_\ell \in X_i$ there is a $u \in \C_i(k)$ so that $v_{\ell} \in X(u)$ in iteration $i$ of
the algorithm. Suppose that the intersection between $W_i(u)$ and $W_j^2(x)$ is a type 1 intersection.  Let $T_{v_\ell} = W_i(u) \setminus H_{j-1}(v_\ell)$.  Note that since $W_j^2(x) \neq \emptyset$, it must be that $j-1 \in A_i$.  Combining this with the fact that $v_\ell \in X(u)$, we 
know that $X^{j-1}(u) \subset T_{v_\ell} \subset W_i(u)$.  (See Algorithm for the notation.) 
Thus there are at least $|X^{j-1}(u)| = \frac{L}{2n}$ points from $Y'$ in $T_{v_\ell}$.  Since, $\load_{Y_i}(W_i(u)) \leq \frac{L}{32n}$, there must be at least $\frac{L}{2n} - \frac{L}{32n} > \frac{L}{5n}$ unscheduled points left in $T_{v_\ell}$ after iteration $i$.  Since we are dealing with a type 1 intersection, $T_{v_\ell} \subset W_j^2(x)$, and thus $W_j^2(x)$ will contain at least $\frac{L}{5n}$ unscheduled points after iteration $i$ and the lemma holds.  See Figure \ref{fig:heavy} for an illustration.

Now suppose that the intersection between $W_i(u)$ and $W_j^2(x)$ is a type 2 intersection.  Consider the region $T'_{v_\ell} = W_j^2(x) \setminus H_i(v_\ell)$.  Since we are assuming $\ell > \frac{L}{10}$, it must be that $\load_{Y'}(T'_{v_\ell}) \geq \frac{L}{10}$.  Since we are dealing with a type 2 intersection, it must be that $T'_{v_\ell} \subset W_i(u)$.  Since $\load_{Y_i}(W_i(u)) \leq \frac{L}{32n}$, we have that $\load_{Y_i}(T'_{v_\ell}) \leq \frac{L}{32n}$ and thus there will be at least $\frac{L}{10} - \frac{L}{32n} > \frac{L}{5n}$ unscheduled points left in $T'_{v_\ell}$ after iteration $i$.  Since $T'_{v_\ell} \subseteq W_j^2(x)$, there must be 
(unscheduled) load at least $\frac{L}{5n}$ in $W_j^2(x)$ after iteration $i$.  See Figure \ref{fig:heavy} for an illustration.

\begin{figure}[htpb]
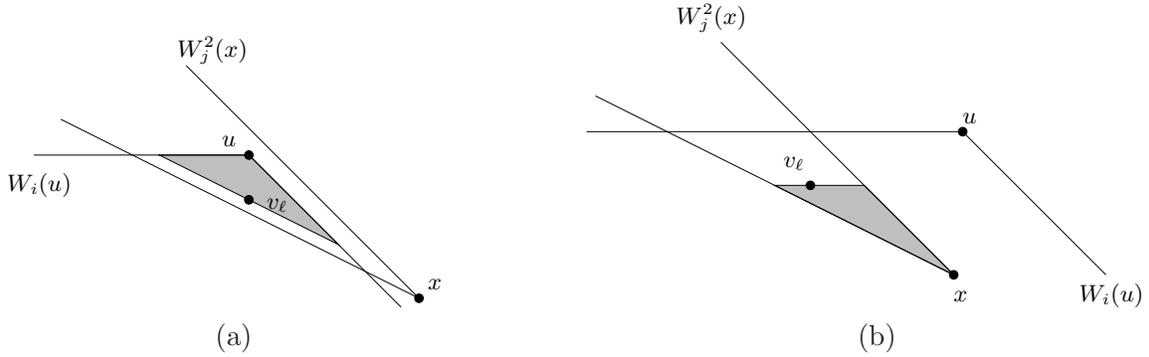

\centering
\begin{tabular}{c@{\hspace{0.1\linewidth}}c}
\input{r1.pstex_t} &
\input{r2.pstex_t} \\
(a) & (b)
\end{tabular}
\caption{Illustration for case 2(b): (a) An illustration of $T_{v_\ell}$.
(b) An illustration of $T'_{v_\ell}$.}
\label{fig:heavy}
\end{figure}

\end{proof}

\section{Restricted Strip Cover}
\label{sec:RSC}
We now describe our algorithm for the RSC which takes an instance consisting of 
sensors $\inst$ and universe $U$ and returns a schedule $S$ of the sensors.  
We will later show that the schedule produced by the algorithm has duration at least $L/5$.  
The reader way wish to consult Section \ref{sec:intro} for the notation used in the
context of RSC. The algorithm starts with
the empty schedule where no sensor is assigned, and assigns a start time
to one sensor in each iteration. We will also denote the current schedule
of the algorithm at any stage of its execution by $S$.

With respect to a schedule $S$, we say the sensor $s$ \textit{dominates coordinate $x$ to the right} if $s$ extends as far to the right as possible
(maximizes $r(s)$) among all sensors that have not been assigned and are
live at $x$. In the event of a tie, we take the sensor that extends as far to 
the left as possible. Further ties are broken arbitrarily.  The sensor that \textit{dominates coordinate $x$ to the left} is defined similarly.
For ease of description, define $\dur(S,0) = \dur(S,m+1) = \infty$.

%\pagebreak
\begin{algorithm*}[ht]
	\caption{}
	\begin{algorithmic}[1]

	\STATE $t \leftarrow 0$
	\STATE $S \leftarrow \emptyset$\\

	\WHILE{TRUE}
		\STATE $t \leftarrow \dur(S) + 1$
		\STATE Let $i$ be the first uncovered coordinate at time $t$ and let $j$ be max $\{x \in U$ $|$ $[i,x]$ is uncovered at time $t \}$.
		\STATE Let $s'$ be the sensor that dominates $i$ to the right.  If $s'$ does not exist, go to step 9.
		\STATE If $s'$ is not live at $j$, $t(s') \leftarrow t$ and $S \leftarrow S \cup \{s'\}$.
		\STATE If $s'$ is live at $j$, let $s''$ be the sensor that dominates $j$ to the left.  If $\dur(S,i-1) \geq \dur(S,j+1)$, $t(s') \leftarrow t$ and $S \leftarrow S \cup \{s'\}$.  Otherwise, $t(s'') \leftarrow t$ and $S \leftarrow S \cup \{s''\}$.
	\ENDWHILE

	\STATE Return $S$.
	\end{algorithmic}
\end{algorithm*}

Let us denote by $t_f$ the duration of the
final schedule produced by the algorithm. At termination, there
is a point $x \in U$ so that $\dur(S,x) = t_f$ and there is no
unassigned sensor that is live at $x$.

\paragraph{Running Time.}
We will iterate through the while loop at most $n$ times because we schedule a sensor in each iteration of the while loop.
Each iteration of the while is readily implemented in
$O(m + n) = O(n)$ time. Thus the algorithm runs in $O(n^2)$ time.  
It may be possible to improve the running time by using data structures
such as segment trees.

\subsection{Approximation Ratio}

\label{analysis}
For an instance $\inst$ of RSC, let $OPT$ denote the duration of an optimal solution for 
$\inst$. We have the following theorem, whose proof is in the appendix.

\begin{theorem}
Given any instance $\inst$ of Restricted Strip Cover, our algorithm returns
a schedule $S$ such that $\dur(S) \geq OPT/5$.
\label{constant}
\end{theorem}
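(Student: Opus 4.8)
The plan is to show that the algorithm never stacks more than five sensors over any point at any single time, and then to charge the load of a critical point against this bounded stacking; since $OPT\le L$ it suffices to prove $\dur(S)\ge L/5$ for the final schedule $S$. First I would establish a \emph{no-gap invariant} by induction on the iterations: at every stage each $y\in U$ is covered exactly at the times $1,2,\ldots,\dur(S,y)$. Indeed, a newly added sensor $s$ gets start time $t(s)=\dur(S)+1$, and since $\dur(S,y)\ge\dur(S)$ in the current schedule, its active window $\{t(s),\ldots,t(s)+d(s)-1\}$ abuts or overlaps the already-covered prefix $\{1,\ldots,\dur(S,y)\}$ of $y$, so no gap is created and $\dur(S,y)$ becomes $\max\{\dur(S,y),\dur(S)+d(s)\}$.

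Using this, at termination let $t_f=\dur(S)$ and take the point $x$ (guaranteed by the remark following the algorithm) with $\dur(S,x)=t_f$ and no unassigned sensor live at $x$. Then every sensor live at $x$ has been scheduled, and by the no-gap invariant each such sensor is active only within $\{1,\ldots,t_f\}$, so
\[ L \;\le\; L(x) \;=\; \sum_{s:\,x\in R(s)} d(s) \;=\; \sum_{\tau=1}^{t_f}\bigl|\{\, s : x\in R(s)\text{ and } s\text{ active at }\tau\,\}\bigr|. \]
Hence the theorem follows once I prove the \emph{stacking bound}: for every point $y$ and time $\tau$, at most five scheduled sensors cover $y$ at time $\tau$. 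This gives $L\le 5t_f$, i.e.\ $\dur(S)=t_f\ge L/5\ge OPT/5$.

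To prove the stacking bound, fix $y$ and $\tau$ and list the sensors covering $y$ at time $\tau$ as $\sigma_1,\ldots,\sigma_k$ in the order the algorithm scheduled them, so $t(\sigma_1)\le\cdots\le t(\sigma_k)\le\tau$. For each $m\ge 2$, when $\sigma_m$ is scheduled the sensor $\sigma_{m-1}$ is already present and covers $y$ throughout $\{t(\sigma_{m-1}),\ldots,t(\sigma_{m-1})+d(\sigma_{m-1})-1\}\ni\tau\ge t(\sigma_m)=\dur(S)+1$; so $y$ is already covered at time $\dur(S)+1$, and therefore $y$ does not lie in the maximal uncovered run $[i,j]$ of that iteration, i.e.\ either $y<i$ or $y>j$. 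This splits $\{\sigma_2,\ldots,\sigma_k\}$ into a \emph{left group} (scheduled when the uncovered run lay entirely to the right of $y$, $y<i$) and a \emph{right group} ($y>j$), and it remains to bound each by two.

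For the ``two per side'' bound I would argue by contradiction: if three left-group sensors all cover $y$ at time $\tau$, then at the moment the third is scheduled the first two already cover $y$ up through time $\tau$, and the rule selecting the right-dominator $s'$ of $i$ — with its tie-break toward the leftmost reach, which ensures $s'$ is strictly contained in no live unassigned sensor — forces this third sensor either to have been superseded earlier by a sensor containing it, or to make the coverage just left of $i$ so tall that the comparison $\dur(S,i-1)\ge\dur(S,j+1)$ would instead have chosen $s''$; a symmetric analysis using the left-dominator rule handles the right group. Adding the at most one sensor $\sigma_1$ that may have been scheduled to cover $y$ itself as a first uncovered coordinate yields $k\le 1+2+2=5$. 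This last step is the main obstacle: it demands a careful case analysis over the two possible choices ($s'$ versus $s''$) in each relevant iteration, leaning on the precise left/right domination definitions and on the monotonicity of $\dur(S,\cdot)$ on either side of the uncovered run; the rest of the argument is routine.
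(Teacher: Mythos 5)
Your reduction of the theorem to a pointwise stacking bound is sound and matches the paper's structure: the no-gap invariant plus the terminating point $x$ with no unassigned live sensor gives exactly Lemma \ref{coverage} ($L \leq L(x) \leq 5\,t_f$), and your split of the sensors covering $(y,\tau)$ into the first one plus those scheduled for an uncovered run lying strictly to the left or strictly to the right of $y$ is precisely the classification used in Lemma \ref{cover5}. Up to that point everything you write is correct.

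But the ``two per side'' bound is the entire content of the theorem, and you have not proved it; you say yourself it is ``the main obstacle'' and offer only a gesture (``forces this third sensor either to have been superseded earlier \ldots or to make the coverage just left of $i$ so tall that \ldots''). That gesture does not assemble into an argument: the domination rule and the comparison $\dur(S,i-1) \geq \dur(S,j+1)$ by themselves do not rule out three same-side sensors stacking at $(y,\tau)$. The paper needs two further structural facts that your sketch does not supply. First, a nesting lemma (Lemma \ref{subset}): if $R(u)$ is strictly contained in $R(v)$ then $v$ is scheduled first and moreover $t(u) \geq t(v) + d(v)$, so nested sensors never overlap in time. Second --- and this is where the real work lies --- the same-side sensors must be further split by whether they are left going or right going: a left-going sensor closing a point left of $x$ is terminal (any later same-side closer $l'$ has $\ell(l') \geq \ell(l)$ and so cannot be needed before time $t(l)+d(l) > t$), while for right-going ones the paper proves the rigidity statement Claim \ref{cl:rhs}, that every subsequent same-side sensor is scheduled for an interval $[x',y']$ with $y'+1 = \ell(r_1)$ exactly; a third closer would be forced to have $y''+1 < \ell(r_1)$, giving the contradiction. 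The proof of Claim \ref{cl:rhs} is itself a delicate case analysis on the sensor $w$ covering $y_j+1$ (scheduled before or after $r_1$, left or right going), with each branch ending in a violation of Lemma \ref{subset}. None of this is routine, and your proposal as written leaves the decisive step open.
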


\subsection{Implication for Planar Sensor Cover}

Here we generalize Theorem \ref{thm:decompose-main} to show that given an instance
of the planar sensor cover problem, we can compute a schedule whose duration is
within a constant factor of the load. The proof proceeds along similar lines, and
we only need an analog in the non-uniform duration case 
of the procedure $\mbox{computeCover}()$ in Section \ref{sec:simple}. Now our algorithm
for RSC furnishes just such an analog, with Lemma \ref{subset} and Observation \ref{obs:partial}
being the equivalent of Observation \ref{obs:partial1}. The rest of the proof follows from
a careful but fairly mechanical modification to the proof of Theorem \ref{thm:decompose-main}.
We conclude with a statement of our constant factor approximation for planar sensor
cover:

\begin{theorem}
For any given convex polygon, there is a polynomial time algorithm
for the planar sensor cover problem (with ranges being translates of the convex
polygon) that computes a schedule whose duration is within a constant factor
of the load of the instance.
\end{theorem}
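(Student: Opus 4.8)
The plan is to mimic the reduction used for Theorem~\ref{thm:decompose-main}, but replace the unit-duration machinery with the RSC algorithm of Theorem~\ref{constant}. First I would dualize exactly as in Section~\ref{sec:prelims}: fix the centroid of the convex polygon as the origin, work with $\bar P$, and observe that a sensor $s$ with range $P(x)$ corresponds to a point $x$ carrying weight $d(s)$, so that the pointwise load $L(x)$ becomes a weighted point-count. Lay down the grid of side $c$ so that every translate of $\bar P$ meets only $\beta=O(1)$ cells and each cell meets at most two adjacent sides; then within some cell the translate has weighted load at least $L/\beta$, and we can make the (weighted) points of that cell responsible for it. As before, inside a single cell a translate coincides with an $i$-wedge for some vertex $i$, so it suffices to produce, for each cell, a schedule such that every $i$-wedge with apex on $\C_i(\lceil L/\beta\rceil)$ (with respect to weighted load) is covered for $\Omega(L)$ time units.

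Next I would run the analogue of Algorithm~\ref{alg:main}, iterating over the $n$ vertices. The key replacement is the inner call: instead of $\mbox{computeCover}(i,X_i,t)$, which uses the unit-duration covering of a level curve, I would invoke the RSC algorithm on the one-dimensional instance obtained by parametrizing $\C_i(k)$ by arclength and mapping each candidate point $q\in X_i$ to the interval $I(q)=\{u\in\C_i(k)\mid q\in W_i(u)\}$ carrying duration $d(q)$. This is a legitimate RSC instance because, as noted in Section~\ref{sec:simple}, each $I(q)$ is an interval of the level curve. Theorem~\ref{constant} then yields a schedule of this RSC instance of duration at least (weighted load)$/5$; by construction this means every $i$-wedge with apex on $\C_i(k)$ gets covered — by points of $X_i$, with their durations — for $\Omega(L/\beta)$ time units. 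The crucial structural facts I need are exactly the ones the excerpt flags: Lemma~\ref{subset} and Observation~\ref{obs:partial} play the role of Observation~\ref{obs:partial1}, giving (a) that at any moment at most a constant number of scheduled sensors overlap any apex of $\C_i(k)$, so the "over-cover" bound analogous to Observation~\ref{obs:overcover} holds, and (b) the domination property: if $q\in W_i(q')$ then $q'$ is scheduled (in iteration $i$) only after $q$ is. Property (b) is what makes the whole ``a point in $W_i(a)$ dominates the dangerous points of $W_j(x)$'' argument in the Lemma go through verbatim.

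With these two facts in hand, the key Lemma of Section~\ref{sec:decompose} carries over essentially unchanged: in its proof replace ``load from points'' by ``weighted load (total duration) from points'', replace ``colored/scheduled'' by ``assigned a start time in iteration $i$'', and keep the same case analysis (antipodal vs. non-antipodal, and the $W_j^1,W_j^2,W_j^3$ split). Every inequality there was about weighted loads of regions and about the domination relation, both of which are duration-aware, so the conclusion — $L$ drops by a factor of at most $5n$ per iteration — still holds. Since $L$ starts at $\Theta(\mathrm{load})$ (after the $\beta$-loss from the grid) and there are $n=O(1)$ iterations, after the last iteration every wedge is covered for $\Omega(\mathrm{load})$ time units; unrolling the reduction, every point of $U$ is covered in the combined schedule for a constant fraction of its load, hence for a constant fraction of $L$. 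Since $OPT\le L$ trivially, this is a constant factor approximation, and the whole construction is polynomial because Algorithm~\ref{alg:main} runs $n$ iterations and the RSC algorithm runs in $O(n^2)$ time.

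The main obstacle I expect is verifying that the RSC algorithm, when used as the level-curve subroutine, really does deliver the two properties needed — in particular the bounded-overlap guarantee (the analogue of Observation~\ref{obs:overcover}), since the RSC algorithm's ``at most $5$ sensors overlap any point'' bound is exactly what produces a constant over-cover, and the domination/ordering guarantee, which requires the RSC algorithm to schedule a range before any range it strictly contains. The excerpt asserts these are captured by Lemma~\ref{subset} and Observation~\ref{obs:partial}; the work is to translate between the RSC ordering rule (``schedule $s'$ before $s$ when $R(s)\subsetneq R(s')$'') and the wedge-domination relation, and to confirm that feeding intervals-on-a-curve (rather than intervals on $\mathbb{R}$) to the RSC algorithm does not break anything — the curve is a monotone staircase, so its arclength parametrization is order-isomorphic to an interval of $\mathbb{R}$, and this identification is routine. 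Beyond that, the remaining modifications to the proof of Theorem~\ref{thm:decompose-main} are, as the excerpt says, mechanical bookkeeping.
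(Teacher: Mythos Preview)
Your proposal is correct and follows essentially the same approach as the paper. The paper's own argument is only a brief sketch: it says to rerun the proof of Theorem~\ref{thm:decompose-main} with the RSC algorithm substituted for $\mbox{computeCover}()$, citing Lemma~\ref{subset} and Observation~\ref{obs:partial} as the non-uniform-duration analogue of Observation~\ref{obs:partial1}, and declaring the remainder a ``careful but fairly mechanical modification'' --- which is precisely the route you outline, with the same dualization, grid, level-curve parametrization, and per-vertex iteration, and the same identification of which RSC properties replace which parts of Observation~\ref{obs:partial1}.
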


\bibliography{sensorbib}

\appendix

\section{Proof of Theorem \ref{constant}}

\begin{lemma}
\label{subset}
 Given some instance of Restricted Strip Cover $\inst$, let $S$ be the schedule returned by our algorithm.  
Let $u,v \in \inst$ be any two, distinct sensors.  If $R(u)$ is strictly
contained in $R(v)$, then $u$ is scheduled after $v$ and
in fact $t(u) \geq t(v) + d(v)$.
\end{lemma}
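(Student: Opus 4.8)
The plan is to zoom in on the single iteration in which the algorithm assigns a start time to $u$ (if $u$ is never assigned there is nothing to prove), and before doing so to record two elementary facts about the algorithm. First, the start times it hands out are non-decreasing over the course of the execution: in every iteration the sensor that gets scheduled is given the start time $\dur(S)+1$ for the current schedule $S$, and inserting a sensor into $S$ can only enlarge the set of times at which any coordinate $x$ is covered, hence can only raise $\dur(S,x)$ at each $x$ and therefore can only raise $\dur(S)$. Second, whenever a sensor $w$ is scheduled in some iteration, $w$ is live at a coordinate $x_0$ that is \emph{uncovered at time $t=\dur(S)+1$} by the schedule built so far: if $w=s'$ then $x_0=i$, the first uncovered coordinate; if $w=s''$ then $x_0=j$, and $j$ lies in the uncovered block $[i,j]$.

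Now fix the iteration in which $u$ is scheduled, with $t=t(u)=\dur(S)+1$ and $i,j,s',s'',x_0$ as in that iteration, so that $x_0\in R(u)$ and hence $x_0\in R(v)$ by the hypothesis $R(u)\subsetneq R(v)$. The first step is to argue that $v$ has already been assigned at the start of this iteration. Suppose not; then $v$ is an unassigned sensor live at $x_0$, so it belongs to the candidate pool from which the dominating sensor at $x_0$ is selected --- the ``dominates $x_0$ to the right'' pool if $x_0=i$ (so $u=s'$), or the ``dominates $x_0$ to the left'' pool if $x_0=j$ (so $u=s''$). But $R(u)\subsetneq R(v)$ says the right endpoint of $u$ is $\le$ that of $v$ and the left endpoint of $u$ is $\ge$ that of $v$, with at least one inequality strict; comparing this against the dominance rule and its tie-break convention (prefer the larger right endpoint, then the smaller left endpoint, for ``to the right'', and symmetrically for ``to the left'') shows that in every case $v$ is strictly preferred to $u$, so the algorithm would not have scheduled $u$ in this iteration. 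This contradiction shows $v$ was assigned in an earlier iteration, and therefore $t(v)\le t(u)$ by the first elementary fact.

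Finally, since $x_0\in R(v)$ but $x_0$ is uncovered at time $t(u)$ by the schedule in force when $u$ is scheduled, $v$ cannot be active at time $t(u)$; as $v$ is active precisely at the times $t(v),\dots,t(v)+d(v)-1$ and we know $t(v)\le t(u)$, this forces $t(u)\ge t(v)+d(v)$, and in particular $t(u)>t(v)$, so $u$ is scheduled strictly after $v$, as claimed. The only genuinely non-routine step is the middle one --- checking that $u$ cannot be the dominating sensor at $x_0$ while $v$ is still unassigned --- and, within it, pinning down the tie-breaking convention for ``dominates to the left''; the rest is bookkeeping about the monotonicity of $\dur$ and of the assigned start times.
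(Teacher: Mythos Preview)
Your proof is correct and follows essentially the same approach as the paper's: both argue that in the iteration where $u$ is scheduled, $u$ must be the sensor dominating some coordinate $x_0\in R(u)\subseteq R(v)$ (either to the left or to the right), and the domination rule would strictly prefer $v$ over $u$ if $v$ were still unassigned, forcing $v$ to have been scheduled earlier; then, since $v$ covers $x_0$ throughout its active interval while $x_0$ is uncovered at time $t(u)$, one gets $t(u)\ge t(v)+d(v)$. Your write-up is actually more explicit than the paper's (you spell out the monotonicity of $\dur(S)$ and hence of the assigned start times, and you isolate the uncovered point $x_0$ cleanly), but the underlying argument is the same.
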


\begin{proof}

 Suppose that $u$ and $v$ are two unscheduled sensors such that $R(u)$
is strictly contained in  $R(v)$.  Sensors are only scheduled when they 
dominate some coordinate to the left or to the right.  Suppose we want to find 
the sensor that dominates some coordinate $i \in [\ell(u), r(u)]$.  We will consider both $u$ and $v$, but will always prefer $v$ to $u$ from the definition of domination.  Therefore, we will schedule $v$ before $u$ and will not consider
another sensor to dominate a coordinate in $[\ell(u), r(u)]$ until after time $t(v) + d(v)$.
\end{proof}

For $x \in U$ and $t > 0$, we define $\cover(x,t)$ to be the number 
of sensors that cover $x$ at time $t$ in the schedule output by
our algorithm.

We need the following observation.

\begin{lemma}

If $\cover(x,t) \leq c$ for each $x \in U$ and $t > 0$, then the duration
$t_f$ of the schedule we output is at least $L/c$.

\label{coverage}

\end{lemma}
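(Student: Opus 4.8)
The plan is to chain together three facts: the duration of the schedule equals $\min_x M(S,x)$; at every point $x$ the total ``active time'' contributed by sensors live at $x$ is bounded below by the load $L(x) \geq L$; and by hypothesis no more than $c$ sensors are simultaneously active at any $x$ at any time. The last fact forces the active times to ``spread out'', so that the union of the active intervals at $x$ — measured with multiplicity at most $c$ — must reach at least time $L/c$, which is exactly what we need for $M(S,x)$.

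First I would fix an arbitrary point $x \in U$ and let $\inst_x = \{ s \in \inst : x \in R(s)\}$ be the sensors live at $x$, with $A = \{ s \in \inst_x : s \text{ is assigned a start time}\}$. For a time $t > 0$, $\cover(x,t)$ is precisely $|\{ s \in A : t(s) \le t \le t(s) + d(s) - 1\}|$. Summing over all $t \ge 1$ and exchanging the order of summation,
\[
\sum_{t \ge 1} \cover(x,t) \;=\; \sum_{s \in A} d(s) \;\le\; \sum_{s \in \inst_x} d(s) \;=\; L(x) \;\ge\; L,
\]
so the left-hand sum is at least $L$ only if $A$ contains every sensor live at $x$ with its full duration — which is not automatic. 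The cleaner route is: $\sum_{t\ge 1}\cover(x,t) = \sum_{s\in A} d(s)$, and I instead argue directly about the prefix $\{1,\dots,M(S,x)\}$. By definition of $M(S,x)$, every coordinate-time $t \in \{1,\dots,M(S,x)\}$ is covered, i.e. $\cover(x,t)\ge 1$; and by hypothesis $\cover(x,t)\le c$ for every $t$. So what I really need is a lower bound on $M(S,x)$ coming from the load, and that requires knowing that the algorithm does not leave $x$ with unused live sensors.

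The key step — and the main obstacle — is therefore the following: at termination, for the point $x$ (or really for any $x$), the sensors live at $x$ that were \emph{not} assigned cannot have been ``usable,'' so their total duration plus the covered prefix must still add up past $L/c$. Concretely: suppose for contradiction $M(S,x) < L/c$. The sensors live at $x$ and active somewhere in $\{1,\dots,M(S,x)\}$ have total duration at most $c \cdot M(S,x) < L$, so there is an assigned or unassigned sensor $s$ live at $x$ whose active interval (or, if unassigned, whose potential contribution) lies entirely at or after time $M(S,x)+1$; in particular there is an unassigned sensor live at $x$, or an assigned sensor that could have been started at $M(S,x)+1$ to cover the first uncovered coordinate. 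Either way this contradicts the termination condition of Algorithm 2, which states that at the final point $x$ with $M(S,x) = t_f$ there is no unassigned sensor live at $x$, together with the fact that once the while loop exits no sensor with $x$ in its range remains to be scheduled. I would formalize this by noting $\sum_{s \in \inst_x} d(s) = L(x) \ge L$ while $\sum_{s \in \inst_x,\ \text{assigned}} d(s) \le \sum_{t=1}^{M(S,x)} \cover(x,t) + (\text{durations of assigned live sensors starting after } M(S,x))$, and the latter extra term is zero because a live sensor starting after the current duration would have been selected in step 6–8 of the while loop before termination. Hence $c\cdot M(S,x) \ge \sum_{t\ge1}\cover(x,t) \ge \sum_{s\in\inst_x} d(s) = L(x) \ge L$, giving $M(S,x) \ge L/c$. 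Since $x$ was arbitrary, $t_f = \min_x M(S,x) \ge L/c$.

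I expect the delicate point to be rigorously justifying that \emph{every} sensor live at $x$ is in fact assigned by the time the algorithm halts (not just the live sensors at the one witnessing point $x^*$ with $M(S,x^*) = t_f$). This should follow from the structure of the while loop — it only exits via step 6 when no sensor dominates the first uncovered coordinate $i$ to the right, which at that moment means no unassigned sensor is live at $i$; and one argues inductively over the sequence of uncovered coordinates and times that this covers all of $U$. If a fully clean argument for that needs the termination invariant stated just before the Running Time paragraph ("At termination, there is a point $x \in U$ so that $\dur(S,x) = t_f$ and there is no unassigned sensor that is live at $x$"), I would lift that invariant to all points by the same domination/termination reasoning.
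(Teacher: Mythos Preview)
Your approach overreaches. You try to establish $M(S,x) \geq L/c$ for \emph{every} $x \in U$, and for that you need every sensor live at $x$ to be assigned at termination. This is false in general: the algorithm halts as soon as \emph{one} coordinate $i$ has no unassigned live sensor, and other coordinates may still have unassigned live sensors. (Take $U=\{1,2\}$ with one unit-duration sensor covering $\{1\}$ and two unit-duration sensors covering $\{2\}$; the algorithm terminates with one of the two sensors at point $2$ still unassigned.) So the invariant cannot be ``lifted to all points,'' and in your chain the step $\sum_{t\ge1}\cover(x,t) \ge \sum_{s\in\inst_x} d(s)$ points the wrong way: the left side equals the total duration of the \emph{assigned} sensors live at $x$, which is at most $L(x)$, with equality only when all of them happen to be assigned.

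The paper's proof sidesteps this entirely by working only at the single point $x^*$ supplied by the termination condition: $\dur(S,x^*)=t_f$ and no unassigned sensor is live at $x^*$. At $x^*$ every live sensor is assigned, so $\sum_{t\ge 1} \cover(x^*,t) = L(x^*)$. Moreover, every start time the algorithm ever issues equals the current $\dur(S)+1 \le t_f+1$, and since $x^*$ is uncovered at time $t_f+1$, no sensor live at $x^*$ can be active at any time exceeding $t_f$; hence $L(x^*) = \sum_{t=1}^{t_f}\cover(x^*,t) \le c\,t_f$, giving $t_f \ge L(x^*)/c \ge L/c$ in one line. You had the pieces in hand---you even cite the termination invariant---but the fix is to use it at that one witnessing point rather than to attempt a (doomed) extension to all of $U$.
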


\begin{proof}

At termination, there is a point $x \in U$ so that $\dur(S,x) = t_f$ and there 
is no unassigned sensor that is live at $x$. Thus, 
$c t_f \geq L(x) \geq L$, and so $t_f \geq L/c$. 
\end{proof}

We will now show that $\cover(x,t) \leq 5$ for each $x \in U$ and $t > 0$.
Theorem \ref{constant} then follows immediately from this and Lemma \ref{coverage}.

In each iteration the algorithm schedules (assigns) a sensor $s$ which
is either $s'$ or $s''$. Let us call the corresponding interval $[i,j]$ 
the {\em interval for which $s$ is scheduled}. (Please refer to the 
algorithm for what $i$ and $j$ stand for.) If $s = s'$, we call $s$ a 
{\em right going} sensor to remember that it was chosen to dominate $i$ to the 
right. In this case, $i$ was not covered at time $t(s)$ before $s$ was
scheduled, but $i$ was covered at time $t(s)$ after $s$ was
scheduled. We say that $s = s'$ {\em closes} $i$ at time $t(s)$.
Similarly, if $s = s''$ we call $s$ a {\em left going} sensor and say
that it closes $j$ at time $t(s)$.

\begin{lemma}

For any $x \in U$ and $t > 0$, $\cover(x,t) \leq 5$.

\label{cover5}

\end{lemma}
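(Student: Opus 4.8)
The plan is to show that if some point $x$ were covered $6$ or more times at some time $t$, we derive a contradiction with the scheduling rule. So suppose for contradiction that $\cover(x,t) \geq 6$ for some $x \in U$ and $t > 0$, and let $s_1, \ldots, s_6$ be six distinct sensors, each live at $x$ and active at time $t$. Each $s_k$ was scheduled in some iteration, for an interval $[i_k, j_k]$ that contained $x$ (since $s_k$ closed either $i_k$ on the right or $j_k$ on the left, and was live at $x$ with $x$ uncovered at time $t(s_k) \le t$ when the closing happened — I will need to verify $x \in [i_k,j_k]$ carefully, using that at time $t(s_k)$ the maximal uncovered interval containing $i_k$ is $[i_k,j_k]$ and $x$ was uncovered at that time since $s_k$ is active at $t \ge t(s_k)$ and is one of the coverers). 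Order the sensors by their start times $t(s_1) \le t(s_2) \le \cdots \le t(s_6)$, all of which are $\le t$, and all satisfy $t(s_k) + d(s_k) - 1 \ge t$.

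The key structural step is to classify each $s_k$ as right-going or left-going and to exploit Lemma \ref{subset}: no two of the six can have one range strictly contained in another, because then the contained one would be scheduled only after the container finished, but both are active simultaneously at time $t$ — contradiction. Hence the six ranges $R(s_k)$ are pairwise non-nested intervals all containing $x$; sorting them by left endpoint also sorts them by right endpoint, so they form a "staircase" $\ell(s_{\pi(1)}) < \cdots < \ell(s_{\pi(6)})$ and $r(s_{\pi(1)}) < \cdots < r(s_{\pi(6)})$ for some permutation $\pi$. Now I split the six into those whose range lies (weakly) to the left of $x$ in the staircase and those to the right; by pigeonhole at least three fall on one side, say three of them $u_1, u_2, u_3$ with $\ell(u_1) < \ell(u_2) < \ell(u_3) \le x$ and correspondingly nested-to-the-right structure, i.e. each $R(u_a)$ reaches at least as far right as $x$. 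The plan is then to look at the one among $u_1, u_2, u_3$ that was scheduled \emph{last}, say $u_b$ with start time $t(u_b)$, and examine the iteration in which $u_b$ was scheduled: at that moment the other two $u$'s were already active and covering $x$ (since their start times are earlier and they remain active through $t \ge t(u_b)$), so $x$ was already covered at time $t(u_b)$ — yet $u_b$ is active at $t(u_b)$ and live at $x$. I must show that whatever $u_b$ closed (its $i$ or $j$) it cannot be that $x$ was already covered at time $t(u_b)$, or, more carefully, bound how many already-active sensors can be "straddling" $x$ in a way compatible with the choice of $s'$/$s''$.

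The heart of the argument, and the main obstacle, is the last paragraph above: turning "many sensors simultaneously active and live at $x$" into a contradiction with the greedy domination rule. The right way to do it is probably to consider, at time $t$, the leftmost uncovered coordinate's history: more precisely, among the $\cover(x,t)$ sensors, look at the first two scheduled, say $a$ (right-going, closing some $i_a \le x$) and $b$; after $a$ is scheduled at time $t(a)$, the coordinate $i_a$ is covered at time $t(a)$, and I want to argue that any later sensor live at $x$ and active at $t(a)$ would have been an available competitor dominating $i_a$ or some nearby coordinate to the right or left, so that the tie-breaking and the "dominate to the right / dominate to the left" choices force the range inclusions that Lemma \ref{subset} forbids — or force that at most a bounded number (five) can coexist. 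Concretely I expect the bookkeeping to go: at most one sensor right-going at the relevant coordinate, at most one left-going, plus the at most three "previously placed" sensors that were already covering the interval when the new one's interval $[i,j]$ was carved out — because $[i,j]$ being the maximal uncovered interval at time $t(s)$ means the sensors covering $i-1$ and $j+1$ (and the sensor that will dominate, and its left-dominating counterpart) are few. I would make this precise by a careful case analysis on whether each of the $\ge 6$ sensors is right-going or left-going and on the relative position of its closing coordinate to $x$, showing each "slot" can be occupied at most once and there are only five slots; the $\dur(S,i-1) \ge \dur(S,j+1)$ comparison in step 8 of the algorithm is what pins down the count, so the delicate part is extracting from that inequality the fact that no sixth sensor can squeeze in. If the clean "five slots" decomposition resists a short argument, the fallback is a direct minimal-counterexample / exchange argument: take the earliest time $t$ and a point $x$ with $\cover(x,t) = 6$, look at the sensor among the six scheduled last, and show the algorithm would not have scheduled it there because $x$ (being already $5$-covered at that time) was not the first uncovered coordinate and its closing coordinate was already covered.
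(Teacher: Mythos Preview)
Your proposal has a genuine gap at its starting point. You claim that for each of the six sensors $s_k$ the interval $[i_k,j_k]$ for which it was scheduled contains $x$, reasoning that ``$x$ was uncovered at that time since $s_k$ is active at $t \ge t(s_k)$.'' This is false for all $k\ge 2$: once the \emph{first} scheduled sensor $s_0$ covering $(x,t)$ is placed, it covers $x$ at every time from $t(s_0)$ through $t$, so when any later $s_k$ is scheduled at time $t(s_k)\in[t(s_0),t]$ the point $x$ is \emph{already} covered and hence $x\notin[i_k,j_k]$. The later sensors therefore close points strictly to the left or strictly to the right of $x$, and this left/right dichotomy is exactly the structure the argument must exploit. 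Relatedly, your attempted contradiction ``$x$ was already covered at time $t(u_b)$ --- yet $u_b$ is active at $t(u_b)$ and live at $x$'' is not a contradiction at all: nothing in the algorithm forbids a newly scheduled sensor from overlapping already-covered coordinates.

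The paper's proof organizes the sensors covering $(x,t)$ differently: set aside the first one $s_0$, and classify every other such sensor by (i) whether the point it closes lies to the left or right of $x$, and (ii) whether it is left-going or right-going. It then shows that at most two sensors close points to the left of $x$ (and symmetrically at most two to the right), giving $1+2+2=5$. The delicate part is bounding the ``close to the left'' sensors by two: once a left-going one appears no further ones can; if the first is right-going, a technical claim pins down that any subsequent one must have its interval abutting $\ell(r_1)-1$, and this uses both Lemma~\ref{subset} and the $\Dur(S,i-1)\ge \Dur(S,j+1)$ comparison in step~8. Your non-nesting/staircase observation is correct and is implicitly used, but the pigeonhole-plus-last-scheduled plan does not reach a contradiction without this finer bookkeeping on where the closed points sit relative to $x$ and to $\ell(r_1)$.
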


\begin{proof}

Fix an $x \in U$ and $t > 0$. If no sensor in the output schedule
covers $(x,t)$, then $\cover(x,t)$ is $0$. Let us therefore suppose
that some sensor covers $(x,t)$, and let $s_0$ denote the first scheduled
sensor that  covers $(x,t)$.  Let us classify any other sensor $s$ that covers
$(x,t)$ into exactly one of the following four types: (1) $s$ closes some
$i < x$ and is left going; (2) $s$ closes some $i < x$ and is
right going; (3) $s$ closes some $i > x$ and is left going; 
(4) $s$ closes some $i > x$ and is right going. We show that there
are at most two sensors of types 1 and 2 put together.
A symmetric argument shows that there are at most two sensors of types 
3 and 4 put together.

\begin{figure}
\begin{center}

\includegraphics[width=.5\linewidth]{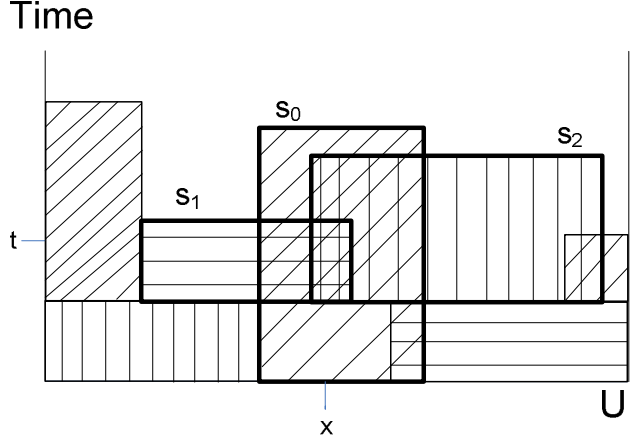}

\caption{There are 3 sensors covering $(x, t)$.  The first scheduled sensor to cover $(x,t)$ is $s_0$.  The next sensor to cover $(x, t)$ is a type 2 sensor $s_1$.  Finally, $(x, t)$ is covered by a type 4 sensor $s_2$.}
\label{fig:RSC}
\end{center}
\end{figure}

Suppose that at some point the algorithm adds a sensor $l$ of type 1. We
claim that after $l$ is added no more sensors of types 1 or 2 are
added.  Suppose $l$ closed $i < x$ at time $t(l)$. Consider some sensor
$l'$ that is live at $x$ and that closes some $i' < x$ when it is later
added by the algorithm. Since $l$ was chosen because it dominated 
$i$ to the left, we can conclude that $\ell(l') \geq \ell(l)$. (If this were 
not the case, then $l'$ would necessarily be live at $i$ and would have been preferred to $l$.)  Observe that the interval $[\ell(l), x]$ is covered by $l$ for all times between $t(l)$ and $t(l) + d(l) - 1$.  
Since $i' \in [\ell(l'), x] \subseteq [\ell(l),x]$, we must have
$t(l') > t(l) + d(l) - 1 \geq t$.  Thus $l'$ will not cover $x$ at time $t$.
We conclude that once we schedule a sensor of type 1, we do not schedule
any more sensors of types 1 or 2.

We will therefore consider the case where the first sensor of type 1 or 2 is a
type 2 sensor which we denote $r_1$. We need the following claim.

\begin{claim}

\label{cl:rhs}

Let $v$ be a sensor scheduled after $r_1$ such that (a) $v$ is live at
$x$, (b) $t(v) \leq t$, and (c) $v$ closes some $z < x$ at time
$t(v)$. Let $[x',y']$ be the interval for which $v$ is scheduled. 
Then $y'+1 = \ell(r_1)$.

\end{claim}

We prove the claim after completing the rest of the proof of the lemma.
If the next sensor of type 1 or 2 that we schedule after $r_1$ is
a type 1 sensor, we do not schedule any more sensors of types 1 and
2. So let us assume that the next sensor of types 1 or 2 that we schedule
after $r_1$ is a type 2 sensor $r_2$. Since $t \geq t(r_2) \geq t(r_1)$,
and $r_2$ closes some $i' < x$ at time $t(r_2)$, we must have $i'
< \ell(r_1)$, and thus $\ell(r_2) < \ell(r_1)$. Now assume for the sake of
a contradiction that there is some sensor $r_3$ of type 1 or 2 
that is scheduled after $l_2$. Reasoning as above, the interval
$[x'',y'']$ that it is scheduled for has $y'' < \ell(r_2)$. Thus
$y'' + 1 \leq \ell(r_2) < \ell(r_1)$, a contradiction to Claim \ref{cl:rhs}.

Thus we have completed the proof of the lemma and we now prove
Claim \ref{cl:rhs}.

{\bf Proof of Claim \ref{cl:rhs}.} First, we clarify to the reader
that in the statement of the claim condition (a) does not require
$v$ to cover $(x,t)$, but only requires $v$ to be live at $x$. 
Let $v_1, \ldots, v_k$ be the sensors satisfying the hypothesis
of the claim, ordered in the sequence in which they were chosen by 
the algorithm. Let $[x_j,y_j]$ be the interval for which $v_j$ 
is scheduled. Since $t(r_1) \leq t(v_j) \leq t$ and $v_j$ closes
some point strictly less than $x$, we must have $y_j < \ell(r_1)$.
We have argued that $y_j \leq \ell(r_1) - 1$, and we want to show
$y_j = \ell(r_1) - 1$. Suppose for the sake of contradiction that
$y_j < \ell(r_1) - 1$ for some $j$, and consider the first $j$ for
which this happens.

When $v_j$ was being scheduled, $y_j + 1$ is covered at
time $t(v_j)$ by a sensor $w$. Clearly, it must be 
that $y_j + 1 = \ell(w)$. So we have $\ell(v_j) < \ell(w) < \ell(r_1)$.

If $w$ was scheduled before $r_1$, then $r(w) < i' < x$ where
$i'$ is the point that $r_1$  closes. Thus $R(v_j)$ properly
contains $R(w)$ but $v_j$ is scheduled after $w$, a contradiction
to Lemma \ref{subset}.

If $w$ was scheduled after $r_1$, then $w$ must be live at $x$,
for otherwise $R(v_j)$ properly contains $R(w)$ and we derive
a contradiction as above. Also, we must have $r(w) < r(r_1)$
for otherwise $R(w)$ properly contains $R(r_1)$ and we reach
a contradiction. Since $r(w) < r(r_1)$, $w$ closes some
$i < \ell(r_1)$. Thus $w$ is scheduled after $r_1$, $t(w) \leq t(v_j)
\leq t$, $w$ is live at $x$, and $w$ closes some point strictly to the
left of $x$. Thus $w = v_{j'}$ for some $j' < j$. If $j = 1$,
we have reached a contradiction. We therefore assume $j > 1$. 
We observe that $w$ cannot be a left going sensor, because
$v_j$ is also live at the point $i$ that $w$ closes, and would
have been preferred to $w$ otherwise since $\ell(v_j) 
< \ell(w)$. Since $j' < j$, we have $y_{j'} + 1 = \ell(r_1)$.
Thus $\Dur(S,y_{j'} + 1) \geq t$ at the time $w = v_{j'}$
is being scheduled. Since $w$ is right going, we must have
$\Dur(S,x_{j'} - 1) \geq \Dur(S,y_{j'} + 1) \geq t$ at the time
$w$ is being scheduled. Let $s$ denote the sensor that covers
$x_{j'}-1$ at times $t(w), t(w) + 1, \ldots, t$ when $w$ is
being scheduled. We have $r(s) = x_{j'} - 1 < x_{j'} < x \leq r(v_j)$. Now $v_j$ closes some point $i'' \in [x_j, y_j]$
at time $t(v_j)$, and since $t \geq t(v_j) \geq t(w)$, $i''$
cannot be in $R(s)$. Since $i'' \leq y_j = \ell(w) - 1 
\leq x_{j'} - 1 = r(s)$, it must be that $i'' < \ell(s)$.
So $\ell(v_j) \leq i'' < \ell(s)$ and we have already argued that
$r(v_j) > r(s)$. So $R(v_j)$ properly contains $R(s)$ but $v_j$ is
scheduled after $s$, a contradiction to Lemma \ref{subset}.

\end{proof}

The following observation about our algorithm for RSC is evident
from the analysis.

\begin{observation}
\label{obs:partial}
Suppose that we stop our algorithm once the duration of the schedule
becomes greater than equal to $t$. Then the total load of all the
scheduled sensors that are live at some point $x$ of the universe
is at most $5 (t + d_{max})$, where $d_{max}$ is the maximum duration of
any input sensor.
\end{observation}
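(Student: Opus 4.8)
The plan is to reduce this observation to Lemma~\ref{cover5} via a simple double counting. Fix a point $x \in U$; the quantity to be bounded is $\sum_s d(s)$, the sum ranging over all scheduled sensors $s$ with $x \in R(s)$. Each such sensor covers $(x,t')$ for exactly the $d(s)$ times $t' \in \{t(s), \ldots, t(s) + d(s) - 1\}$, so summing the coverage counts over all times recovers this sum exactly:
\[
  \sum_{s \text{ scheduled},\ x \in R(s)} d(s) \;=\; \sum_{t' \ge 1} \cover(x,t').
\]
Thus it suffices to show that $\cover(x,t') \le 5$ for every $t'$ and that the summand vanishes once $t'$ exceeds $t + d_{max}$.

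For the range of relevant times, I would use that the algorithm checks its stopping rule between iterations, so at the start of every iteration that actually executes we have $\dur(S) < t$; hence the sensor scheduled in that iteration is assigned start time $\dur(S) + 1 \le t$. Therefore every scheduled sensor $s$ is active only at times $t(s), \ldots, t(s) + d(s) - 1$, each of which is at most $t + d_{max} - 1$, and so $\cover(x,t') = 0$ for all $t' \ge t + d_{max}$.

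For the bound of $5$, I would appeal directly to Lemma~\ref{cover5}: it is stated for the schedule output by the full algorithm, but its proof uses only properties of the scheduling rule and of the order in which sensors are chosen, and stopping the algorithm early merely deletes some later-scheduled sensors without altering any earlier decision, so the same type-counting argument gives $\cover(x,t') \le 5$ for the partial schedule as well. Putting the pieces together,
\[
  \sum_{s \text{ scheduled},\ x \in R(s)} d(s) \;=\; \sum_{t'=1}^{t + d_{max} - 1} \cover(x,t') \;\le\; 5\,(t + d_{max} - 1) \;\le\; 5\,(t + d_{max}),
\]
which is the claimed bound. The only step needing any care, and hence the closest thing to an obstacle, is confirming that Lemma~\ref{cover5} is unaffected by early termination; but since termination only removes sensors, every step of its proof survives unchanged, so this is routine.
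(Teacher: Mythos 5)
Your proof is correct and takes exactly the route the paper intends: the paper gives no explicit proof of this observation, merely asserting it is ``evident from the analysis,'' i.e., from Lemma~\ref{cover5}. Your double counting of $\cover(x,t')$ over times, the remark that every scheduled sensor receives start time $\dur(S)+1 \le t$ so coverage vanishes beyond time $t+d_{max}-1$, and the note that early termination only yields a prefix of the full schedule (so the bound of $5$ still applies) together constitute precisely that argument.
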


\end{document}